\newcommand{\emi}{({\em i}\,)\xspace}
\newcommand{\emii}{({\em ii}\,)\xspace}
\newcommand{\emiii}{({\em iii}\,)\xspace}
\newcommand{\N}{\mathbb{N}}
\newcommand{\R}{\mathbb{R}}
\newcommand{\la}{\lambda}
\newcommand{\eps}{\epsilon}
\newcommand{\Ec}{{\cal E}}
\newcommand{\Mc}{{\cal M}}
\newcommand{\xib}{\boldsymbol{\xi}}
\newcommand{\Star}[1]{*'\!_{#1}\,}
\newcommand{\Starxi}[1]{*^\xi_{#1}\,}
\newcommand{\Starth}[1]{*^\theta_{#1}\,}
\newcommand{\ovcirc}[1]{\accentset{\circ}{#1}}
\newcommand{\we}{\wedge}
\newcommand{\ot}{\otimes}
\newcommand{\vth}{{\vec{\theta}}}
\DeclareMathOperator{\hd}{\ast}
\DeclareMathOperator{\tg}{{\rm tg}}
\DeclareMathOperator{\Exp}{{\rm Exp}}
\DeclareMathOperator{\arctg}{{\rm arc\,tg}}
\DeclareMathOperator{\sgn}{{\rm sgn}}
\DeclareMathOperator{\lr}{\!\lrcorner}
\newtheorem{thr}{Theorem}
\newtheorem{lm}[thr]{Lemma}
\numberwithin{equation}{section}
\numberwithin{thr}{section}
\begin{document}

\title{Constraints of the Teleparallel Equivalent of General Relativity in a gauge}
\author{ Andrzej Oko{\l}\'ow}
\date{May 22, 2024}

\maketitle
\begin{center}
{\it  Institute of Theoretical Physics, Warsaw University,\\ ul. Pasteura 5, 02-093 Warsaw, Poland\smallskip\\
oko@fuw.edu.pl}
\end{center}
\medskip

\begin{abstract}
We consider a specific Hamiltonian formulation of the Teleparallel Equivalent of General Relativity, where the canonical variables are expressed by means of differential forms. We show that some ``position'' variables of this formulation can be always gauge-transformed to zero. In this gauge the constraints of the theory become simpler, and the other ``position'' variables acquire a nice geometric interpretation that allows for an alternative, clearer form of the constraints. Based on these results we derive some exact solutions to the constraints. 
\end{abstract}

\section{Introduction}

General Relativity in a Hamiltonian form is a constrained system \cite{adm}---canonical variables on a spatial slice of a spacetime must satisfy some constraints in order to serve as initial values for Hamilton's equations defining the time evolution of the variables. This statement holds true also in the case of various versions of the Teleparallel Equivalent of General Relativity (TEGR)---see \cite{nester,wall-av,maluf-2,bl,maluf-1,maluf,mal-rev,oko-tegr-I,oko-tegr-II,q-suit,ham-nv,ferraro,ham-tegr-rev} for descriptions of Hamiltonian formulations of TEGR and \cite{tegr-cosm-rev} for a recent extensive review of the theory. 

For some formulations of TEGR the constraints are rather complicated functions of canonical variables. This complexity can make it difficult to analyze the constraints and is perhaps the reason, why it is difficult to find in the literature exact solutions to TEGR constraints. The goal of the present paper is to introduce a gauge, which makes the constraints of a specific version of TEGR much simpler. Moreover, in this gauge it will be natural to change the way the variables are described, which will result in an alternative, more transparent form of the simplified constraints. This will then allow to derive some exact solutions to the constraints. 

More precisely, we will consider a rather non-standard Hamiltonian formulation of TEGR introduced in \cite{q-suit,ham-nv}. In this formulation the ``position'' variables are functions $(\xi^I)$, ($I\in\{1,2,3\}$), and one-forms $(\theta^I)$. The momentum $\zeta_I$ conjugate to $\xi^I$ is a three-form, while the momentum $r_I$ conjugate to $\theta^I$ is a two-form. All the variables are defined on a three-dimensional spatial slice $\Sigma$ of a spacetime and $(\theta^I)$ form a global coframe on the slice. The complete set of constraints consists of six primary ones and four secondary ones. The primary constraints generate gauge transformations on the phase space interpreted as Lorentz transformations \cite{oko-tegr-I} and can be divided into boost and rotation constraints. The secondary constraints are a scalar and vector constraints---the scalar constraint is a fairly complicated function of the canonical variables. All the constraints are of the first class \cite{oko-tegr-II}.

We will show that every point of the phase space can be gauge-transformed by Lo\-ren\-tzian boosts to a point, where each function $\xi^I$ is the zero function on $\Sigma$. In the constraints there are many terms containing the functions and in the gauge $\xi^I=0$ most of them will be annihilated. Moreover, in this gauge the coframe $(\theta^I)$ becomes orthonormal with respect to the Riemannian metric induced on $\Sigma$ by the spacetime (Lorentzian) metric.     

It will be then natural to \emi express all tensor fields appearing in the constraints in terms of their components in the orthonormal coframe $(\theta^I)$ and \emii decompose the two-forms $(r_I)$ and $(d\theta^I)$ therein into irreducible representations of $SO(3)$, that is, into the antisymmetric, traceless symmetric and trace parts. In this way we will obtain another quite transparent form of the constraints. The direct benefit of this step will be very simple general solutions to both boost and rotation constraints: $\zeta_I$ will be completely fixed as a linear function of the antisymmetric part of $(d\theta^I)$, while the antisymmetric part of $(r_I)$ will be forced to vanish. Thus the set of all the constraints will be narrowed to the scalar and vector constraints, being conditions imposed on the one-forms $(\theta^I)$ and the two-forms $(r_I)$, the latter ones lacking its antisymmetric part. 

Moreover, the obtained form of the scalar and vector constraints will raise the following question: is it possible to use the remaining gauge freedom to annihilate the antisymmetric part of $(d\theta^I)$ (keeping at the same time the gauge $\xi^I=0$ intact) and simplify thereby the constraints even further? We will make a preliminary analysis of this issue---results of the analysis will suggest that this simplification is possible.

Finally, to illustrate the usefulness of the simplified form of the constraints we will derive some exact solutions to them. We will obtain three families of solutions---one of them will be labeled by solutions to an eikonal equation.

Let us emphasize that the present paper does not claim to be exhaustive---further analysis of the constraints and the solutions obtained here as well as new solutions to the constraints will be presented in the forthcoming paper \cite{prep-ao-js}.

\section{Preliminaries}

Let $(\mathbb{M},\eta)$ be a four-dimensional Minkowski vector space---here $\eta$ is a scalar product of signature $(-,+,+,+)$ defined on the four-dimensional vector space $\mathbb{M}$. We additionally assume that an orientation of $\mathbb{M}$ is fixed. Let us also fix a basis $(e_A)$, $A\in\{0,1,2,3\}$, of $\mathbb{M}$, orthonormal with respect to $\eta$ and compatible with the orientation of $\mathbb{M}$. Denote by $(\eta_{AB})$ components of $\eta$ in the basis and by $(\eta^{AB})$ the components of the inverse (dual) scalar product in the same basis. We will use these components to raise and lower indices denoted by initial letters of the Latin alphabet $A,B,C,D\in\{0,1,2,3\}$.         

Denote by $\mathbb{E}$ the linear span of $\{e_1,e_2,e_3\}$. Let us also fix that orientation of $\mathbb{E}$, which is compatible with its basis $(e_I)$, $I\in\{1,2,3\}$. $\eta$ restricted to $\mathbb{E}$ is a positive definite scalar product and the basis $(e_I)$ is obviously orthonormal with respect to the product. Its components in $(e_I)$ will be denoted by $(\delta_{IJ})$, the components of the inverse product by $(\delta^{IJ})$. We will use these components to raise and lower indices denoted by letters from the middle of the Latin alphabet $I,J,K,L,...\in\{1,2,3\}$. 

Suppose that $\Mc$ is an oriented four-dimensional manifold. The configuration space for a Lagrangian formulation of TEGR (see e.g. \cite{thir}) can be chosen to be the set of all non-degenerate one-forms on $\Mc$ valued in $\mathbb{M}$. An $\mathbb{M}$-valued  one-form $\boldsymbol{\theta}$ on $\Mc$ is non-degenerate, if its decomposition $\boldsymbol{\theta}=\boldsymbol{\theta}^A\ot e_A$ in the basis $(e_A)$ provides us with a global coframe (cotetrad) $(\boldsymbol{\theta}^A)$ on $\Mc$. This coframe defines on the manifold a Lorentzian metric
\begin{equation}
g=\eta_{AB}\,\boldsymbol{\theta}^A\ot\boldsymbol{\theta}^B,
\label{g}
\end{equation}
which makes $\Mc$ a spacetime of general relativity. In this formulation the action is a functional depending on the one-forms $(\boldsymbol{\theta}^A)$ and their exterior derivatives.

Since the Lagrangian formulation of TEGR is based on differential forms, it is very convenient to describe the corresponding Hamiltonian version of the theory in the same terms---this can be done by means of a formalism developed in \cite{ham-diff} and \cite{mielke} (see also \cite{os}). Then the Legendre transformation gives the following canonical variables \cite{oko-tegr-I}: a quadruplet of one-forms $(\theta^A)$ and a quadruplet of two-forms $(p_A)$ as momenta conjugate to $(\theta^A)$, all defined on an oriented three-dimensional slice of $\Mc$, which will be denoted by $\Sigma$ \footnote{In \cite{oko-tegr-I} we assumed that $\Mc=\R\times\Sigma$ and that the orientation of $\Sigma$ is induced by the orientation of $\Mc$ in the following way: suppose $(x^i)$ is a (local) coordinate system on $\Sigma$ and $t\in\R$. Then $(t,x^i)$ is a coordinate system on $\Mc$. If $(t,x^i)$ is compatible with the orientation of $\Mc$, then $(x^i)$ is compatible with the orientation of $\Sigma$.}. The slice is spatial in the sense that
\begin{equation}
q=\eta_{AB}\,\theta^A\ot\theta^B, 
\label{q}
\end{equation}
is a Riemannian metric on $\Sigma$ --- this metric is in fact induced on $\Sigma$ by the spacetime metric \eqref{g}.             

In \cite{q-suit,ham-nv} we introduced new non-standard variables on the phase space of TEGR:
\begin{enumerate}
\item a sextuplet $(\xi^I,\theta^J)$, $I,J\in\{1,2,3\}$, such that $\xi^I$ is a real function on $\Sigma$ and the one-forms $(\theta^J)$ constitute a global coframe on the manifold;
\item a sextuplet $(\zeta_I,r_J)$, $I,J\in\{1,2,3\}$, where $\zeta_I$ is a three-from and $r_J$ a two-form on $\Sigma$.     
\end{enumerate}
The three-form $\zeta_I$ is a momentum conjugate to $\xi^I$, while the two-form $r_J$ is a momentum conjugate to $\theta^J$.

To describe a relation between the new variables and the original ones $(\theta^A,p_B)$  let us introduce a symbol 
\begin{equation}
\sgn(\theta^I):=
\begin{cases}
1 & \text{if $(\theta^I)$ is compatible with the orientation of $\Sigma$}\\
-1 & \text{otherwise}
\end{cases}
\label{sgn}
\end{equation}
and express the spatial metric \eqref{q} in terms of $(\xi^I,\theta^J)$:  
\begin{equation}
q=q_{IJ}\,\theta^I\ot\theta^J=\Big(\delta_{IJ}-\frac{\xi_I\xi_J}{1+\xi^K\xi_K}\Big)\,\theta^I\ot\theta^J.
\label{q-xi}
\end{equation}
In a local coordinate frame $q=q_{ij}\,dx^i\ot dx^j$. The metric inverse (dual) to $q$ maps a one-form $\alpha$ on $\Sigma$ to a vector field on $\Sigma$, which will be denoted by $\vec{\alpha}$. Thus in the local coordinate system
\[
\vec{\alpha}=q^{ij}\alpha_i\partial_{x^j},
\]    
where $(q^{ij})$ is the inverse of $(q_{ij})$. We have
\begin{equation}
\vth{}^I\lr\theta^J=\delta^{IJ}+\xi^I\xi^J. 
\label{tt-q}
\end{equation}

The original variables depend on the new ones in the following way\footnote{In \cite{q-suit,ham-nv} we introduced in fact a family of new variables labeled by functions defined on the set of all coframes $(\theta^I)$ and valued in the set $\{-1,1\}$. Here we use the variables given by the function $(\theta^I)\mapsto \sgn(\theta^I)$ defined by \eqref{sgn}.}: 
\begin{align}
\theta^0&=\sgn(\theta^J)\frac{\xi_I}{\sqrt{1+\xi_K\xi^K}}\,\theta^I, & \theta^I&=\theta^I,\label{th0}\\
p_0&=\sgn(\theta^J)\sqrt{1+\xi_K\xi^K}\vec{\theta}^I\lr\zeta_I, & p_I&=r_I-\xi_I\,\vec{\theta}^J\lr\zeta_J.\nonumber
\end{align}
In terms of the new variables the Poisson bracket on the phase space reads\footnote{Suppose that $\alpha$ is a $k$-form on $\Sigma$, and $F$ a functional, which depends on $\alpha$ and $d\alpha$. Then functional derivative $\delta F/\delta\alpha$ is a $(3-k)$-form such that for every variation $\delta\alpha$
\[
\delta F=\int_\Sigma\delta\alpha\we \frac{\delta F}{\delta\alpha}.
\]
\label{funct-der}}
\[
\{F,G\}=\int_\Sigma\Big(\frac{\delta F}{\delta {\xi}^I}\we\frac{\delta G}{\delta \zeta_I}+\frac{\delta F}{\delta {\theta}^I}\we\frac{\delta G}{\delta r_I}-\frac{\delta G}{\delta {\xi}^I}\we\frac{\delta F}{\delta \zeta_I}-\frac{\delta G}{\delta {\theta}^I}\we\frac{\delta F}{\delta r_I}\Big).
\]

The physical part of the phase space is distinguished by a number of constraints---below we will present smeared versions of them. In the formulas below we will use the Hodge dualization ${\hd}$ defined by the metric $q$ and the orientation of $\Sigma$, as well as a symbol
\begin{equation}
\xi^0:=\sgn(\theta^I)\sqrt{1+\xi^K\xi_K}.
\label{xi^0}
\end{equation}
In the smeared form there are two primary constraints, the boost and the rotation constraint respectively:
\begin{align}
B(a)&=\int_\Sigma a\we\Big(-\frac{\xi_Iq_{JK}}{(\xi^0)^2}\theta^I\we{\hd}(d\xi^J\we\theta^K)+q_{IJ}\theta^I\we{\hd}d\theta^J+\vec{\theta}^I\lr\zeta_I+\xi^Ir_I\Big), \label{boost-c}\\
R(b)&=\int_\Sigma b\we(\theta^I\we{\hd}r_I+q_{IJ}d\xi^I\we\theta^J).\label{rot-c}
\end{align}
and two secondary ones: the vector constraint
\begin{equation}
V(\vec{N})=\int_\Sigma d\xi^I\we\vec{N}\lr\zeta_I -{d}{\theta}^I\we(\vec{N}\lr r_I)-(\vec{N}\lr{\theta}^I)\we {d}r_I
\label{v(n)-new}
\end{equation}
and the scalar one
\begin{multline}
S(N)=\int_\Sigma N\Big(\frac{1}{2}(r_I\we\theta^J)\we{\hd}(r_J\we\theta^I)-\frac{1}{4}(r_I\we\theta^I)\we{\hd}(r_J\we\theta^J)-d(\vec{\theta}^J\lr\zeta_J)-\xi^I\we dr_I+\\+\frac{1}{4(\xi^0)^4}d(\xi_I\theta^I)\we\xi_J\theta^J\we{\hd}(d(\xi_K\theta^K)\we\xi_L\theta^L)+\frac{1}{2(\xi^0)^2}d(\xi_I\theta^I)\we\xi_J\theta^J\we{\hd}(d\theta_K\we\theta^K)-\\-\frac{1}{(\xi^0)^{2}}(q_{IJ}d\xi^J\we\theta^I+\xi_Id\theta^I)\we\theta^K\we{\hd}(d\theta_K\we\xi_L\theta^L)+\frac{1}{2}d\theta_I\we\theta^J\we{\hd}(d\theta_J\we\theta^I)-\\-\frac{1}{4}d\theta_I\we\theta^I\we{\hd}(d\theta_J\we\theta^J)\Big).
\label{scal-c}
\end{multline}
In the formulas describing the constraints $a$ and $b$ are one-forms on $\Sigma$, $N$ is a function and $\vec{N}$ a vector field on the manifold. All these fields play a role of smearing fields. All the constraints above turned out to be of the first class---this fact was proven in \cite{oko-tegr-II}, where the constraints were expressed in terms of the original variables $(\theta^A,p_B)$.

\section{Gauge $\xi^I=0$}

Since in the constraints \eqref{boost-c}--\eqref{scal-c} there are many terms containing the functions $(\xi^I)$, it is clear that on the subset of the phase space, where the functions are zero, the constraints are much simpler. Moreover, Equation \eqref{q-xi} implies that if $\xi^I=0$, then the coframe $(\theta^I)$ is orthonormal with respect to the spatial metric $q$.  On the other hand it was demonstrated in \cite{oko-tegr-I} that the boost and rotation constraints, \eqref{boost-c} and \eqref{rot-c}, generate gauge transformations on the phase space, which can be interpreted as local Lorentz transformations. Hence a question arises: is it possible to use these gauge transformations to zero the variables $(\xi^I)$? 

In this section we will show that the answer to this question is in affirmative---the variables $(\xi^I)$ can be always gauge-transformed to zero on the whole $\Sigma$ by means of gauge transformations generated by the boost constraint.

To be more precise, let us fix an arbitrary point $(\zeta^0_I,r^0_J,\xi^K_0,\theta^L_0)$ in the phase space and denote by $\Omega^1({\Sigma})$ the space of all one-forms on the manifold $\Sigma$. We will show that there exists a smooth\footnote{The curve \eqref{l-al} is smooth in the sense that for every point $y\in \Sigma$ the corresponding curve $]\lambda_1,\lambda_2[\,\ni\lambda\mapsto a(\lambda)_y\in T^*_y\Sigma$, where $a(\lambda)_y$ is the value of $a(\lambda)$ at $y$, is smooth.} curve 
\begin{equation}
\R\supset\,]\lambda_1,\lambda_2[\,\ni\lambda\to a(\lambda)\in\Omega^1({\Sigma})
\label{l-al}
\end{equation}
such that \emi for every $\lambda$ the one-form $a(\lambda)$ does not depend on the canonical variables and \emii  the gauge transformation generated by the boost constraint $B(a(\lambda))$ maps the point $(\zeta^0_I,r^0_J,\xi^K_0,\theta^L_0)$ to a point 
\begin{equation}
({\zeta}_I,{r}_J,0,{\theta}^L).
\label{ph-xi=0}
\end{equation}
Moreover, for smooth fields $(\xi^K_0,\theta^L_0)$ the corresponding map \eqref{l-al} will be valued in smooth one-forms.

\subsection{Gauge transformations given by the boost constraint \label{gau-boo}}

Given a map \eqref{l-al} of values independent of canonical variables, the gauge transformation generated by $B(a(\lambda))$ is given by the following differential equation system:
\begin{align}
\frac{d\zeta_I}{d\lambda}&=\{\zeta_I,B(a(\lambda))\}=-\frac{\delta B(a(\lambda))}{\delta \xi^I},\label{zeta-l}\\
\frac{dr_J}{d\lambda}&=\{r_J,B(a(\lambda))\}=-\frac{\delta B(a(\lambda))}{\delta \theta^J},\label{r-l}\\
\frac{d\xi^K}{d\lambda}&=\{\xi^K,B(a(\lambda))\}=\frac{\delta B(a(\lambda))}{\delta \zeta_K}=\vec{\theta}^K\lr a(\lambda),\label{xi-l}\\
\frac{d\theta^L}{d\lambda}&=\{\theta^L,B(a(\lambda))\}=\frac{\delta B(a(\lambda))}{\delta r_L}=\xi^L a(\lambda),\label{theta-l}
\end{align}  
---see Footnote \ref{funct-der} for a definition of the functional derivatives.

These gauge transformations are defined on the whole phase space, but we will be interested merely in the action of the transformations on the fixed point $(\zeta^0_I,r^0_J,\xi^K_0,\theta^L_0)$. Let us then denote by
\begin{equation}
\lambda\mapsto\big(\tilde{\zeta}_I(\lambda),\tilde{r}_J(\lambda),\tilde{\xi}^K(\lambda),\tilde{\theta}^L(\lambda)\big)
\label{l-phs}
\end{equation}
a curve in the phase space being a solution of the equation system \eqref{zeta-l}--\eqref{theta-l} and satisfying the initial condition\footnote{The initial condition makes sense if $\lambda_1<0<\lambda_2$ in \eqref{l-al}, which can be assumed without loss of generality.} 
\begin{equation}
\Big(\tilde{\zeta}_I(0),\tilde{r}_J(0),\tilde{\xi}^K(0),\tilde{\theta}^L(0)\Big)=(\zeta^0_I,r^0_J,\xi^K_0,\theta^L_0).
\label{ini}
\end{equation}
Thus each value of \eqref{l-phs} for $\lambda\neq 0$ is the result of the gauge transformation acting on $(\zeta^0_I,r^0_J,\xi^K_0,\theta^L_0)$. 

Let us finally emphasize that Equations \eqref{zeta-l}--\eqref{theta-l} are defined {\em pointwisely} with respect to $\Sigma$, i.e., at any point $y\in\Sigma$ the l.h.s. of the equations are derivatives of field values at $y$ and the r.h.s. are built from field values at the same point. However, in the analysis below in order to keep the notation simple, we will not use any special symbols to distinguish values of fields at $y$ from the fields. This concerns Sections \ref{gtr-xi}--\ref{gtr-mom} except those parts of them, where it will be clear from the context that symbols used denote fields rather than their values at a point.    

\subsection{Choice of $a(\la)$ }

In order to transform the point $(\zeta^0_I,r^0_J,\xi^K_0,\theta^L_0)$ to one, where $\xi^I=0$, we have to select a specific gauge transformation from the set given by Equations \eqref{zeta-l}--\eqref{theta-l}. Since this set is labeled by the curves \eqref{l-al}, we will make the selection by choosing a curve $\la\mapsto a(\lambda)$ adapted appropriately to the point $(\zeta^0_I,r^0_J,\xi^K_0,\theta^L_0)$. To describe our choice let us note that for every $\la$ the variables $\big(\tilde{\theta}^I(\la)\big)$ in \eqref{l-phs} form a global coframe on $\Sigma$. Our choice of an adapted curve is
\begin{equation}
\la\mapsto a(\lambda)=a_J \tilde{\theta}^J(\la),
\label{al-th}
\end{equation}
where $(a_J)$ are real functions on $\Sigma$ given by
\begin{equation}
a_J=
\begin{cases}
0 & \text{if $\xib_0=0$},\\
-\dfrac{\xi_{0J}}{\xib_0}\arctg\xib_0 & \text{otherwise}.
\end{cases}
\label{a_J}
\end{equation}
In the formula above
\[
\xib_0:=\sqrt{\delta_{IJ}\xi^I_0\xi^J_{0}}\geq 0.
\]

Let us now comment on the choice of $a(\la)$. Deriving Equations \eqref{zeta-l}--\eqref{theta-l} we assumed that the one-form $a(\la)$ does not depend on the canonical variables, but our choice \eqref{al-th} depend on $(\xi^I_0)$ and $\tilde{\theta}^I(\la)$. Is there any discrepancy here? To answer this question let us note that any gauge transformation described by the equations is defined on the whole phase space and the assumed independence of $a(\la)$ of the canonical variables means that $a(\la)$ is the same for every point of the phase space. Our one-form \eqref{al-th} is adapted to the fixed point $(\zeta^0_I,r^0_J,\xi^K_0,\theta^L_0)$ and in this fixed form, that is, being the same for every point of the phase space, defines a gauge transformation on the whole space. Thus there is no discrepancy here.

\subsection{Gauge transformation of $\xi^I_0$ \label{gtr-xi}}

Now let us fix a point $y\in\Sigma$ and proceed as described in the last remark in Section \ref{gau-boo}. Inserting $a(\la)$ given by \eqref{al-th} to Equation \eqref{xi-l} and applying \eqref{tt-q} we obtain
\begin{equation}
\frac{d\tilde{\xi}^K}{d\lambda}=a_J\vec{\tilde{\theta}}^K\lr\tilde{\theta}^J=a_J(\delta^{KJ}+\tilde{\xi}^K\tilde{\xi}^J)=
\begin{cases}
0 & \text{if $\xib_0=0$},\\
-\dfrac{\xi_{0J}}{\xib_0}\arctg\xib_0\, (\delta^{KJ}+\tilde{\xi}^K\tilde{\xi}^J)& \text{otherwise}.
\end{cases}
\label{xi-lxx}
\end{equation}
It is a simple exercise to convince oneself that 
\begin{equation}
\tilde{\xi}^K(\lambda)=
\begin{cases}
0 & \text{if $\xib_0=0$},\\
\dfrac{\xi^K_{0}}{\xib_0}\tg\big((1-\lambda)\arctg\xib_0\big) & \text{otherwise},
\end{cases}
\label{tl-xi-l}
\end{equation}
satisfies \eqref{xi-lxx} for every $\lambda$, for which the tangent is well defined. For our purpose it is enough to restrict the range of $\lambda$ in such a way that the argument of the tangent belongs to $]-\pi/2,\pi/2[$. Then 
\begin{equation}
\lambda\in
\begin{cases}
\R & \text{if $\xib_0=0$},\\
\big]1-\dfrac{\pi}{2\arctg\xib_0},1+\dfrac{\pi}{2\arctg\xib_0}\big[ & \text{otherwise}.
\end{cases}
\label{l-range}
\end{equation}
It is clear, that for every $y\in\Sigma$ the range of $\lambda$ includes the interval $[0,1]$ and consequently
\begin{align*}
&\tilde{\xi}^K(0)=\xi^K_0,&&\tilde{\xi}^K(1)=0
\end{align*}
everywhere on $\Sigma$.

\subsection{Gauge transformations of $\theta^L_0$}

To find the associated gauge transformation of $\theta^L_0$ we rewrite \eqref{theta-l} as
\[
\frac{d\tilde{\theta}^L}{d\lambda}=\tilde{\xi}^L(\lambda) a_J\tilde{\theta}^J,
\]  
where $\tilde{\xi}^L(\lambda)$ is given by \eqref{tl-xi-l} and $a_J$ by \eqref{a_J}, hence
\begin{equation}
\frac{d\tilde{\theta}^L}{d\lambda}=
\begin{cases}
0 & \text{if $\xib_0=0$},\\
-\dfrac{\tg\big((1-\lambda)\arctg\xib_0\big)}{\xib_0} \dfrac{\arctg\xib_0}{\xib_0}\,\xi^L_{0}\xi_{0J}\,\tilde{\theta}^J & \text{otherwise}.
\end{cases}
\label{theta-l-1}
\end{equation}
Let $\Xi$ be a function on $\Sigma$ valued in the space of all linear operators on $\mathbb{E}$ such that
\[
\Xi^L{}_J=\xi^L_{0}\xi_{0J}.
\]    
Then the solution to Equation \eqref{theta-l-1} corresponding to non-zero $\xib_0$ is given by
\begin{multline}
\tilde{\theta}^L(\lambda)={\bigg[\exp\Big[\Big(\int_0^\lambda-\tg\big((1-\lambda')\arctg\xib_0\big)\arctg \xib_0\,d\lambda'\Big)\,\xib_0^{-2}\Xi \Big]\bigg]^L}_J\,\theta^J_0=\\=
{\bigg[\exp\Big[\ln\Big(\frac{\cos(\arctg\xib_0)}{\cos((1-\lambda)\arctg\xib_0)}\Big)\,\xib_0^{-2}\Xi \Big]\bigg]^L}_J\,\theta^J_0
\label{t-th-sol}
\end{multline}
(note that this formula confirms that $\big(\tilde{\theta}^L(\lambda)\big)$ is a coframe on $\Sigma$). For non-zero $\xib_0$ 
\[
(\xib_0^{-2}\Xi)^2=\xib_0^{-2}\Xi,
\]
hence
\[
(\xib_0^{-2}\Xi)^n=\xib_0^{-2}\Xi,
\]
for every natural $n>0$. Denoting by $f(\lambda)$ the logarithm at the second line of \eqref{t-th-sol} and by $\mathbf{1}$ the identity map on $\mathbb{E}$, we calculate
\begin{multline*}
\exp\big(f(\lambda)\xib_0^{-2}\Xi\big)=\mathbf{1}+\sum_{n=1}^\infty\frac{\big(f(\lambda)\xib_0^{-2}\Xi\big)^n}{n!}=\mathbf{1}+\sum_{n=1}^\infty\frac{(f(\lambda))^n}{n!}\xib_0^{-2}\Xi=\\=\mathbf{1}+\Big(-1+1+\sum_{n=1}^\infty\frac{(f(\lambda))^n}{n!}\Big)\xib_0^{-2}\Xi=\mathbf{1}+\big(\exp(f(\lambda))-1\big)\xib_0^{-2}\Xi=\\=\mathbf{1}+\Big(\frac{\cos(\arctg\xib_0)}{\cos((1-\lambda)\arctg\xib_0)}-1\Big)\xib_0^{-2}\Xi.
\end{multline*}  
Thus the solution of \eqref{theta-l-1} reads
\begin{equation}
\tilde{\theta}^L(\lambda)=
\begin{cases}
\theta^L_0 & \text{if $\xib_0=0$},\\
\theta^L_0+\dfrac{\cos(\arctg\xib_0)-\cos((1-\lambda)\arctg\xib_0)}{\xib_0^2\cos((1-\lambda)\arctg\xib_0)}\xi^L_0\xi_{0J} \theta^J_0 & \text{otherwise}.
\end{cases}
\label{t-th-fin}
\end{equation}
Clearly, it satisfies the initial condition 
\[
\tilde{\theta}^L(0)=\theta^L_0.
\]
and the range of $\lambda$, for which the solution \eqref{t-th-fin} is valid, is given by \eqref{l-range}.

\subsection{Gauge transformations of the momenta \label{gtr-mom}}

\subsubsection{Functional derivative of the Hodge operator}

Since the boost constraint \eqref{boost-c} contains terms built by means of the Hodge operator ${\hd}$ the r.h.s. of both Equations \eqref{zeta-l} and \eqref{r-l} contain functional derivatives of the operator with respect to $\xi^K$ and $\theta^L$. To proceed further with the analysis of the gauge transformations we have to convince ourselves that the derivatives are well defined. 

Given $k$-forms $\alpha$ and $\beta$ on $\Sigma$ ($0\leq k\leq 3$), let us introduce a symbol
\begin{equation}
\alpha \we\Star{A}\beta\equiv\vec{\theta}^B\lr\Big(\eta_{AB}\,\alpha\we{{\hd}}\beta-(\vec{\theta}_A\lr\alpha)\we{{\hd}}(\vec{\theta}_B\lr\beta)-(\vec{\theta}_B\lr\alpha)\we{{\hd}}(\vec{\theta}_A\lr\beta)\Big).
\label{a*'b}
\end{equation}
It was shown in \cite{os} that if $\alpha,\beta$ are independent of the variables $(p_A,\theta^B)$ and 
\[
F=\int_\Sigma\alpha\we{\hd}\beta
\]
is a functional depending on $\theta^A$ via the Hodge operator, then 
\begin{equation}
\frac{\delta F}{\delta\theta^A}=\alpha\we\Star{A}\beta.
\label{dFdth}
\end{equation}
By definition of the functional derivative with respect to a differential form (see Footnote \ref{funct-der})
\begin{equation}
\delta F=\int_\Sigma \delta\theta^A\we \frac{\delta F}{\delta\theta^A}.
\label{dF-df}
\end{equation}
Using \eqref{th0} it is easy to check that
\[
\delta\theta^0=\frac{q_{IJ}}{\xi^0}(\delta\xi^I)\theta^J+\frac{\xi_I}{\xi^0}\delta\theta^I,
\]
where $(q_{IJ})$ are given by \eqref{q-xi} and $\xi^0$ by \eqref{xi^0}. Setting this result and \eqref{dFdth} to Equation \eqref{dF-df} we obtain
\[
\delta F=\int_\Sigma \Big[\delta\xi^I\we\Big( \frac{q_{IJ}}{\xi^0}\theta^J\we (\alpha\we\Star{0}\beta)\Big)+\delta\theta^I\Big(\frac{\xi_I}{\xi^0}(\alpha\we\Star{0}\beta)+(\alpha\we\Star{I}\beta)\Big)\Big]. 
\]
Thus
\begin{align*}
&\frac{\delta F}{\delta\xi^I}=\frac{q_{IJ}}{\xi^0}\theta^J\we (\alpha\we\Star{0}\beta)\equiv \alpha \we \Starxi{I}\beta,\\
&\frac{\delta F}{\delta\theta^I}=\frac{\xi_I}{\xi^0}(\alpha\we\Star{0}\beta)+(\alpha\we\Star{I}\beta)\equiv \alpha\we \Starth{I}\beta.
\end{align*}

Taking into account the formula \eqref{a*'b}, we conclude that for every $(\xi^I,\theta^J)$ both functional derivatives above are well defined. Of course, one can find explicit expressions for the derivatives in terms of $(\xi^I,\theta^I)$, but we will not need them, since the conclusion will turn out to be sufficient for our goal.

\subsubsection{Gauge transformations of $\zeta^0_I$ and $r^0_J$}

Suppose that $\alpha$ is a one-form and $\beta$ a $k$-form on $\Sigma$. Then \cite{os}
\begin{equation}
{\hd}({\hd}\beta\we\alpha)=\vec{\alpha}\lr\beta.
\label{a-b}
\end{equation}

Denoting
\[
B'(a)\equiv\int_\Sigma a\we\Big(-\frac{\xi_Iq_{JK}}{(\xi^0)^2}\theta^I\we{\hd}(d\xi^J\we\theta^K)+q_{IJ}\theta^I\we{\hd}d\theta^J\Big)
\]
and using Equation \eqref{a-b} the boost constraint \eqref{boost-c} can be rewritten as follows:
\[
B(a)=B'(a)+\int_\Sigma  a\we\big( {\hd}({\hd}\zeta_I\we\theta^I)+\xi^Ir_I\big)=B'(a)+\int_\Sigma \big( \theta^I\we{\hd}a\we {\hd}\zeta_I+\xi^Ia\we r_I\big) 
\]
---in the last step we used the fact that ${\hd}\zeta_I$ is a zero-form (a function). This formula allows us to transform Equations \eqref{zeta-l} and \eqref{r-l} to the following form:
\begin{equation}
\begin{aligned}
\frac{d\tilde{\zeta}_I}{d\lambda}&=-\frac{\delta B'(a(\lambda))}{\delta \xi^I}-(\tilde{\theta}^K\we\Starxi{I}a(\lambda)){\hd}\tilde{\zeta}_K-\big( (\tilde{\theta}^K\we{\hd}a(\lambda))\we \Starxi{I}\tilde{\zeta}_K\big)-a(\lambda)\we \tilde{r}_I,\\
\frac{d\tilde{r}_J}{d\lambda}&=-\frac{\delta B'(a(\lambda))}{\delta \theta^J}-{\hd}a(\lambda)({\hd}\tilde{\zeta}_J)-(\tilde{\theta}^L\we\Starth{J}a(\lambda))({\hd}\tilde{\zeta}_L)-\big( (\tilde{\theta}^L\we{\hd}a(\lambda))\we \Starth{J}\tilde{\zeta}_L\big).
\end{aligned}
\label{mom-l}
\end{equation}
Let us emphasize that the r.h.s. of these two equations are functions of $a(\lambda)$ given by \eqref{al-th}, $\tilde{\theta}^I(\lambda)$ given by \eqref{t-th-fin} and $\tilde{\xi}^I(\lambda)$ described by \eqref{tl-xi-l}. Because of their peculiar form, it is not clear that these fields are differentiable at those points of $\Sigma$, where $\xib_0$ vanishes. Non-differentiability of the fields would be troublesome, because the functional $B'(a)$ contains the derivatives $d\xi^I$ and $d\theta^J$. Fortunately, all these fields are smooth provided $(\xi^I_0,\theta^J_0)$ are smooth---a justification of this statement will be given in Section \ref{smooth}.          

Thus we obtained a system \eqref{mom-l} of differential equations. Of course, we do not dare to try to solve it, but we do not need an explicit solution actually. In fact, it is sufficient to show that there exists a solution $\lambda\mapsto(\tilde{\zeta}_I(\lambda),\tilde{r}_J(\lambda))$ satisfying the appropriate initial condition (see \eqref{ini}) and defined on the interval \eqref{l-range}. To this end we will apply the following lemma \cite{maurin}\footnote{The original lemma in \cite{maurin} is formulated in terms of a general Banach space rather than a finite dimensional vector space.}:

\begin{lm}
Let $V$ be a finite-dimensional real vector space and $L(V)$ a space of linear maps from $V$ into itself. If mappings $[\Lambda_1,\Lambda_2]\ni\lambda\mapsto A(\lambda)\in L(V)$ and $[\Lambda_1,\Lambda_2]\ni\lambda \mapsto b(\lambda)\in V$ are continuous on $[\Lambda_1,\Lambda_2]$, then for every $(\lambda_0,v_0)\in [\Lambda_1,\Lambda_2]\times V$ there exists exactly one solution $[\Lambda_1,\Lambda_2]\ni\lambda\mapsto v(\lambda)\in V$ of the differential equation
\begin{equation}
\frac{dv}{d\lambda}=A(\lambda)v+b(\lambda),
\label{ilode}
\end{equation}         
which satisfies the initial condition $v(\lambda_0)=v_0$. 
\label{ilode-lm}
\end{lm}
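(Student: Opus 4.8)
The plan is to recast the initial value problem as a fixed-point equation and invoke the Banach contraction principle, exploiting linearity to obtain a solution on the whole interval $[\Lambda_1,\Lambda_2]$ at once rather than merely locally. First I would fix any norm $\|\cdot\|$ on the finite-dimensional space $V$ together with the induced operator norm on $L(V)$, and observe that a continuous curve $v\colon[\Lambda_1,\Lambda_2]\to V$ solves \eqref{ilode} with $v(\lambda_0)=v_0$ if and only if it satisfies the integral equation
\[
v(\lambda)=v_0+\int_{\lambda_0}^{\lambda}\big(A(s)v(s)+b(s)\big)\,ds.
\]
The equivalence is the fundamental theorem of calculus in one direction, and differentiation of the integral (using continuity of the integrand) in the other.

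Next I would introduce the Banach space $\mathcal{C}\equiv C([\Lambda_1,\Lambda_2],V)$ of continuous $V$-valued curves and define the operator $T$ by letting $(Tv)(\lambda)$ be the right-hand side above; since the integrand is continuous, $Tv$ is continuous, so $T$ maps $\mathcal{C}$ into itself, and a solution of the problem is exactly a fixed point of $T$. Because $A$ is continuous on the compact interval $[\Lambda_1,\Lambda_2]$, it is bounded there, say $\|A(\lambda)\|\leq M$ for all $\lambda$, and this uniform bound is the only quantitative input that the argument requires.

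The main obstacle is that $T$ equipped with the ordinary supremum norm is a contraction only on a subinterval of length $<1/M$, which by itself yields just local existence; the point of the lemma is global existence on all of $[\Lambda_1,\Lambda_2]$. To achieve this in one step I would equip $\mathcal{C}$ with the weighted (Bielecki) norm $\|v\|_{*}\equiv\sup_{\lambda}e^{-2M|\lambda-\lambda_0|}\|v(\lambda)\|$, which is equivalent to the supremum norm and hence still complete. Estimating $\|A(s)(u(s)-v(s))\|\leq M e^{2M|s-\lambda_0|}\|u-v\|_{*}$ under the integral and carrying out the elementary integration then gives $\|Tu-Tv\|_{*}\leq\tfrac12\|u-v\|_{*}$, so $T$ is a genuine contraction on the full interval. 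The Banach fixed-point theorem now produces a unique fixed point, which is the desired solution, and its uniqueness among all fixed points furnishes uniqueness of the solution on $[\Lambda_1,\Lambda_2]$.

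Should one prefer to avoid the weighted norm, the same conclusion follows from Picard iteration: writing $L=\Lambda_2-\Lambda_1$, an induction gives the factorial estimate $\|T^{n}u-T^{n}v\|_{\infty}\leq\frac{(ML)^{n}}{n!}\|u-v\|_{\infty}$, so that $\frac{(ML)^{n}}{n!}<1$ for $n$ large; hence some iterate $T^{n}$ is a contraction, and the version of the fixed-point theorem for maps with a contracting power again yields a unique fixed point, i.e.\ global existence and uniqueness. Either route is routine once the boundedness of $A$ on the compact interval is established; the only step demanding genuine care is the global-in-$\lambda$ control, which is exactly what the linear structure of \eqref{ilode} guarantees by preventing finite-time blow-up.
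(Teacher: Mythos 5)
Your proof is correct, but there is nothing in the paper to compare it against: the paper does not prove this lemma at all---it quotes it from Maurin's textbook \cite{maurin}, with a footnote remarking that the original result there is stated for a general Banach space rather than a finite-dimensional one. What you have written is a complete, self-contained proof of the standard kind: the passage to the equivalent integral equation, the uniform bound $\|A(\lambda)\|\le M$ from continuity on the compact interval, and then global contraction either through the Bielecki weighted norm (your estimate $\|Tu-Tv\|_{*}\le\tfrac12\|u-v\|_{*}$ is right) or through the factorial bound $\|T^{n}u-T^{n}v\|_{\infty}\le\frac{(ML)^{n}}{n!}\|u-v\|_{\infty}$ together with the fixed-point theorem for maps having a contracting iterate. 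Two small observations. First, your argument never uses finite-dimensionality of $V$ except through completeness, so it in fact proves the stronger Banach-space version alluded to in the paper's footnote. Second, the one point that should be stated explicitly in a fully written-out version is that the weighted norm is equivalent to the supremum norm---the weight $e^{-2M|\lambda-\lambda_0|}$ is bounded above and below by positive constants on the compact interval---so the weighted space is still complete; you assert this, and it is immediate, so the argument stands.
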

\noindent In other words, the lemma guarantees the existence and uniqueness of a solution of the {\em non-homogeneous linear} differential equation \eqref{ilode}, satisfying a fixed initial condition. Moreover, the solution is defined on the {\em entire} interval $[\Lambda_1,\Lambda_2]$. Let us show now that the lemma is applicable to the system \eqref{mom-l}.

Given $y\in\Sigma$, all collections $(\tilde{\zeta}_I,\tilde{r}_J)$ form a $12$-dimensional real vector space $V$. Since the functional $B'(a)$ is independent of the momenta the functional derivatives at the r.h.s. of \eqref{mom-l} constitute the non-homogeneous term corresponding to $b(\lambda)$ in \eqref{ilode}. It is clear that the remaining parts of the r.h.s. of \eqref{mom-l} are linear in $(\tilde{\zeta}_I,\tilde{r}_J)$ and consequently they correspond to the term $A(\lambda)v$ in \eqref{ilode}. In other words, for every $y\in\Sigma$ the system \eqref{mom-l} is a non-homogeneous linear differential equation defined on a finite-dimensional vector space. Taking into account that the functional derivatives of the Hodge operator ${\hd}$ with respect to $\xi^K$ and $\theta^L$ are well defined (see the previous subsection), it is not difficult to realize that both $(i)$ the non-homogeneous term in \eqref{mom-l} and $(ii)$ the linear operator acting on $(\tilde{\zeta}_I,\tilde{r}_J)$ in \eqref{mom-l} are continuous functions of the parameter $\lambda$ on the entire (open) interval \eqref{l-range}. 

We conclude that Lemma \ref{ilode-lm} is applicable to the system \eqref{mom-l} with the range of $\lambda$ restricted to every closed interval $[\Lambda_1,\Lambda_2]$ contained in \eqref{l-range}. Therefore, given $y\in\Sigma$,  there exists a (unique) solution $\lambda\mapsto(\tilde{\zeta}_I(\lambda),\tilde{r}_J(\lambda))$ of \eqref{mom-l} satisfying the initial condition 
\[
\big(\tilde{\zeta}_I(0),\tilde{r}_J(0)\big)=(\zeta^0_{I},r^0_{J})
\]                
and defined on the entire open interval \eqref{l-range}. 

\subsection{Summary of the analysis of gauge transformations}

Thus we showed that for every $y\in\Sigma$ and for every values of $(\zeta^0_{I},r^0_{J},\xi^K_0,\theta^L_0)$ at $y$, there exists a solution of the system \eqref{zeta-l}--\eqref{theta-l} with the specially chosen one-form $a(\lambda)$ such that \emi it is defined on the interval \eqref{l-range}, \emii it satisfies  the initial condition \eqref{ini} and \emiii $\tilde{\xi}^K(1)=0$. 

Given point $(\zeta^0_{I},r^0_{J},\xi^K_0,\theta^L_0)$ in the phase space, the ``pointwise'' solutions found in the previous subsections form a curve in the phase space, provided for every $y\in\Sigma$ there exists a non-empty interval $[\lambda_1,\lambda_2]$ contained in \eqref{l-range}. Clearly, such an interval exists and for our purposes we can choose it to be $[0,1]$.

Thus for every $(\zeta^0_{I},r^0_{J},\xi^K_0,\theta^L_0)$ there exists a curve \eqref{l-phs} in the phase space such that
\begin{enumerate}
\item it is defined on the interval $[0,1]$;
\item it is a solution of the system \eqref{zeta-l}--\eqref{theta-l} specified by the curve \eqref{l-al} defined on the interval $[0,1]$ by Equation \eqref{al-th};
\item it satisfies  the initial condition \eqref{ini};
\item $\tilde{\xi}^K(1)=0$. 
\end{enumerate}

All these mean that every point of the phase space can be gauge-transformed to one, for which $\xi^K=0$ everywhere on $\Sigma$.

\section{Smoothness of the gauge-transformed fields $(\tilde{\xi}^K)$ and $(\tilde{\theta}^L)$  \label{smooth} }

It is clear that the fields $\tilde{\xi}^K(\lambda)$ and $\tilde{\theta}^L(\lambda)$ given by, respectively, \eqref{tl-xi-l} and \eqref{t-th-fin}, are smooth at every $y\in\Sigma$ such that $\xib_0(y)\neq 0$, provided the initial fields $(\xi_0^I)$ and $(\theta^J_0)$ are smooth. But if $\xib_0(y)=0$ one may expect non-smoothness or even non-differentiability of these fields at such $y$, because 
\begin{enumerate}
\item the fields are described by two different formulae depending on whether $\xib_0$ is zero or not;
\item there is $\xib_0$ in denominators of some expressions there;
\item the function 
\[
\Sigma\ni y\mapsto \xib_0(y)=\sqrt{\delta_{IJ}\xi^I(y)\xi^J(y)}\in\R
\]          
may be not differentiable at $y$, for which $\xib_0(y)=0$;
\end{enumerate} 
Below we will show that, contrary to this expectation, both fields are smooth at every $y$ such that $\xib_0(y)=0$.

Let us fix $\lambda\in\, ]0,1]$ and rewrite the solution \eqref{tl-xi-l} as follows:
\begin{equation}
\tilde{\xi}^K(\lambda)=\kappa_\lambda(\xib_0)\xi^K_0,
\label{tlxi-kp}
\end{equation}
where
\begin{equation}
\kappa_\lambda(x):=
\begin{cases}
1-\lambda & \text{if $x=0$},\\
\dfrac{\tg\big((1-\lambda)\arctg x\big)}{x} & \text{otherwise}.
\end{cases}
\label{kappa_l}
\end{equation}
On the other hand, the solution \eqref{t-th-fin} can be expressed in the following form:
\begin{equation}
\tilde{\theta}^L(\lambda)=\theta^L_0+\frac{\chi_\lambda(\xib_0)}{\cos\big((1-\lambda)\arctg(\xib_0)\big)}\,\xi^L_0\xi_{0J} \,\theta^J_0,
\label{tlth-chi}
\end{equation}
where
\[
\chi_\lambda(x):=
\begin{cases}
\dfrac{-\lambda^2+2\lambda}{2} & \text{if $x=0$},\smallskip\\
\dfrac{\cos((1-\lambda)\arctg x)-\cos(\arctg x)}{x^2}&\text{otherwise}.
\end{cases}
\]
Note that for $\lambda\in\,]0,1]$ and $\xib_0\geq 0$ the denominator in \eqref{tlth-chi} is never zero.

\begin{lm}
Let a smooth function $f:\R\to\R$ be analytic at $x=0$, i.e., there exists a positive number $\rho$ such that for every $|x|<\rho$
\begin{equation}
f(x)=\sum_{k=0}^\infty a_k x^k.
\label{pw-sr}
\end{equation}
If the function 
\[
h(x)=
\begin{cases}
a & \text{for $x=0$},\\
\dfrac{f(x)}{x^n} & \text{otherwise},
\end{cases} \quad \quad\text{$a\in \R$,\,\, $0<n\in\N$},   
\]
is continuous at $x=0$, then it is analytic at $x=0$ and smooth on $\R$.    
\label{f-analyt}
\end{lm}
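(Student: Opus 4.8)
The plan is to exploit the power-series expansion of $f$ near the origin and to show that continuity of $h$ at $x=0$ forces the first $n$ Taylor coefficients of $f$ to vanish; once this is established, the expression $f(x)/x^n$ collapses to a genuine power series about $x=0$, which is analytic there and smooth in a neighbourhood of the origin, while away from the origin $h$ is trivially smooth as a quotient of smooth functions with a non-vanishing denominator.

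First I would split the expansion \eqref{pw-sr}. For $0<|x|<\rho$ write
\[
h(x)=\frac{f(x)}{x^n}=\sum_{k=0}^{n-1}a_k x^{k-n}+\sum_{j=0}^{\infty}a_{n+j}x^{j}\equiv P(x)+g(x).
\]
Because the series \eqref{pw-sr} converges absolutely on $|x|<\rho$, the tail $\sum_{k\ge n}a_k x^k=x^n g(x)$ may be factored term by term, so $g(x)=\sum_{j\ge 0}a_{n+j}x^j$ is itself an absolutely convergent power series on $|x|<\rho$; in particular $g$ is continuous (indeed analytic) there with $g(0)=a_n$. The remainder $P(x)$ is a finite combination of strictly negative powers of $x$.

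The key step is then a contradiction argument establishing $a_0=\dots=a_{n-1}=0$. If some $a_k$ with $k<n$ were non-zero, let $m$ be the least such index; then $P(x)$ behaves like $a_m x^{m-n}$ as $x\to 0$, with $m-n<0$, so $|P(x)|\to\infty$, whereas $g(x)\to a_n$ stays bounded. Hence $h=P+g$ would be unbounded near $0$, contradicting the assumed continuity, which requires $\lim_{x\to 0}h(x)=a$ to be finite. Therefore $P\equiv 0$ and $h(x)=g(x)$ for all $0<|x|<\rho$; evaluating at the origin and using continuity gives $a=h(0)=\lim_{x\to0}g(x)=g(0)=a_n$, so the power-series identity $h(x)=\sum_{j\ge0}a_{n+j}x^j$ holds on the whole interval $|x|<\rho$. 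This exhibits $h$ as analytic at $x=0$.

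Finally, smoothness on all of $\R$ follows by patching: on $\{x\neq0\}$ the function $h=f/x^n$ is a quotient of the smooth function $f$ by the nowhere-vanishing smooth function $x^n$, hence $C^\infty$; on a neighbourhood of the origin $h$ coincides with the convergent power series $g$, hence is $C^\infty$ there as well. I expect the only genuine subtlety to be the bookkeeping in the first step, namely justifying that the shifted series $g$ is again a bona fide power series with positive radius of convergence, which is exactly what absolute convergence of \eqref{pw-sr} provides; after that, the vanishing of the low-order coefficients and the two smoothness verifications are routine.
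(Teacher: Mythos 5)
Your proof is correct and follows essentially the same route as the paper's: use the power-series expansion to write $h$ on the punctured interval, observe that continuity at $0$ forces $a_0=\dots=a_{n-1}=0$ and $a=a_n$, and conclude that $h$ coincides with the convergent series $\sum_{j\ge 0}a_{n+j}x^j$ near the origin while being trivially smooth away from it. The only difference is that you spell out the coefficient-vanishing step via a contradiction argument (the lowest nonzero coefficient would make $h$ blow up), which the paper leaves implicit in its limit computation; this is a welcome elaboration, not a different approach.
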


\begin{proof}
Clearly, the function $h$ is smooth on $\R\setminus\{0\}$. If $h$ is continuous at $x=0$, then
\[
a=h(0)=\lim_{x\to 0}h(x)=\lim_{x\to 0}\frac{f(x)}{x^n}=\lim_{x\to 0} \sum_{k=0}^\infty a_k x^{k-n},
\]      
which means that $a_0=a_1=\ldots =a_{n-1}=0$ and $a_n=a$. Therefore for $|x|<\rho$
\[
h(x)=\sum_{k=0}^\infty a_{k+n}x^k
\]   
and, consequently, $h$ is analytic at zero and smooth on the whole $\R$. 
\end{proof}

\begin{lm}
Suppose $f$ satisfies the assumptions of Lemma \eqref{f-analyt}. If $f$ is an even function, then the function 
\[
[0,\infty[\ni x\mapsto \bar{f}(x):=f(\sqrt{x})\in \R
\]
is a restriction to $[0,\infty[$ of a smooth function $\check{f}:\,]-\rho^2,\infty[\,\to\R$.
\label{f-even}
\end{lm}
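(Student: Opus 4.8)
The plan is to use the evenness of $f$ to rewrite the composition $f(\sqrt{x})$ as a genuine power series in $x$ on a neighbourhood of the origin, thereby removing the only possible source of non-smoothness, and then to invoke smoothness of compositions away from the origin, finally gluing the two descriptions by locality of smoothness.

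First I would extract the consequence of evenness for the Taylor coefficients in \eqref{pw-sr}. Since $f(-x)=f(x)$ for all $x$ and the expansion holds for $|x|<\rho$, comparison of coefficients forces $a_k=0$ for every odd $k$, so that
\[
f(x)=\sum_{k=0}^\infty a_{2k}\,x^{2k},\qquad |x|<\rho.
\]
Consequently, for $x\in[0,\rho^2[$ one has $\sqrt{x}\in[0,\rho[$ and
\[
\bar f(x)=f(\sqrt{x})=\sum_{k=0}^\infty a_{2k}\,x^{k}.
\]
Because $\sum_k a_{2k}x^{2k}$ converges absolutely for $|x|<\rho$, the series $\sum_k a_{2k}\,y^{k}$ converges for $|y|<\rho^2$; hence
\[
g(x):=\sum_{k=0}^\infty a_{2k}\,x^{k}
\]
is analytic, and in particular smooth, on $]-\rho^2,\rho^2[$, and it coincides with $\bar f$ on $[0,\rho^2[$.

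Next I would handle the region away from the origin. On $]0,\infty[$ the square root is smooth and $f$ is smooth on all of $\R$, so $x\mapsto f(\sqrt{x})$ is smooth there; moreover $g(x)=f(\sqrt{x})$ on the overlap $]0,\rho^2[$. I therefore set
\[
\check f(x):=
\begin{cases}
g(x) & \text{if } x\in\,]-\rho^2,\rho^2[,\\
f(\sqrt{x}) & \text{if } x\in\,]0,\infty[.
\end{cases}
\]
The two branches are each smooth and agree on their overlap, so, smoothness being local, $\check f$ is a well-defined smooth function on $]-\rho^2,\rho^2[\,\cup\,]0,\infty[\,=\,]-\rho^2,\infty[$. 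Finally, $\check f$ restricted to $[0,\infty[$ equals $\bar f$ (at the origin $\check f(0)=a_0=f(0)=\bar f(0)$), which is the assertion.

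The one delicate point, and the real content of the lemma, is smoothness at $x=0$, where $\sqrt{\cdot}$ itself is not even differentiable. Evenness is exactly what cures this: it makes $f(\sqrt{x})$ depend on $x$ only through the integer powers $(\sqrt{x})^{2k}=x^{k}$, so the troublesome square root disappears and an honest power series in $x$ survives. I do not anticipate any obstacle beyond this observation; once the vanishing of the odd coefficients is noted, the remaining steps are routine uses of power-series convergence and the locality of smoothness.
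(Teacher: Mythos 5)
Your proof is correct and follows essentially the same route as the paper's: evenness kills the odd Taylor coefficients, so $\bar{f}$ becomes a power series in $x$ convergent on $]-\rho^2,\rho^2[$, which is then combined with the obvious smoothness of $x\mapsto f(\sqrt{x})$ on $]0,\infty[$. Your write-up merely makes the final gluing step explicit, which the paper leaves implicit in the phrase ``analytically extended.''
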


\begin{proof}
It is obvious that the function $\bar{f}$ is smooth on $]0,\infty[$. If $f$ is even, then in the power series \eqref{pw-sr} there is no term with an odd power of $x$. Thus for every $x$ such that $\sqrt{x}<\rho$ 
\[
\bar{f}(x)=\sum_{k=0}^\infty a_{2k}x^k.
\]  
But the power series above is convergent for every $|x|<\rho^2$. Therefore $\bar{f}$ can be analyticly extended to a smooth function $\check{f}$ defined on $]-\rho^2,\infty[\,$.
\end{proof}  

Consider now the function $\kappa_\lambda$ given by \eqref{kappa_l}. Since $\tg$ is the quotient of $\sin$ by $\cos$, it is an analytic function on $]-\pi/2,\pi/2[$ ($\cos$ in non-zero on the interval). Therefore $\arctg$ is analytic on $\R$ as the inverse of $\tg$. Finally, $x\mapsto \tg\big((1-\lambda)\arctg x\big)$ is analytic on $\R$ being a composition of two analytic functions. We  conclude then that the latter function satisfies all conditions Lemma \ref{f-analyt} imposes on the function $f$. Since $\kappa_\lambda$ is continuous at $x=0$, it is analytic at $x=0$ and smooth on $\R$ by virtue of the lemma.           

Note now that $\kappa_\lambda$ is an even function. This fact and the previous conclusion allow us to apply Lemma \ref{f-even} to the function. The lemma guarantees that there exists a smooth function $\check{\kappa}_\lambda$ defined on an open domain containing all non-negative real numbers such that
\[
\kappa_\lambda(\xib_0)=\check{\kappa}_\lambda(\xi_{0K}\xi^K_0).
\]    
Thus $\tilde{\xi}^K(\lambda)$ given by \eqref{tlxi-kp} is a smooth field on $\Sigma$, provided $\xi^K_0$ is smooth.

A similar reasoning applied to the functions $\xi^K_0\mapsto\chi_\lambda(\xib_0)$ and $\xi^K_0\mapsto \cos\big((1-\lambda)\arctg(\xib_0)\big)$, leads to the conclusion that $\tilde{\theta}^L(\lambda)$ given by \eqref{tlth-chi} is a smooth field on $\Sigma$, provided  $\xi^K_0$ and $\theta^L_0$ are smooth.

Let us also note that the formula \eqref{a_J} defining the components $a_J$, can be rewritten as   
\begin{equation}
a_J=\sigma(\xib_0)\,\xi_{0J},
\label{a_J-sig}
\end{equation}
where
\[
\sigma(x):=
\begin{cases}
-1& \text{if $x=0$},\\
-\dfrac{\arctg x}{x}&\text{otherwise}
\end{cases}.
\]
Now, using reasoning similar to the previous one and the smoothness of $\big(\tilde{\theta}^L(\lambda)\big)$ just proven, one can easily show that the one-form $a(\lambda)$ given by \eqref{al-th} is smooth, provided $(\xi_0^I)$ and $(\theta^J_0)$ are smooth.

\section{The constraints of TEGR in the gauge $\xi^I=0$ }

If $\xi^I=0$ everywhere on $\Sigma$, then according to \eqref{q-xi} $q_{IJ}=\delta_{IJ}$. Now the boost constraint \eqref{boost-c} can be rewritten in a form of the following algebraic condition imposed on the momentum $\zeta_I$:
\begin{equation}
\vth{}^I\lr\zeta_I+\theta_J\we {\hd}d\theta^J=0.
\label{boost-0}
\end{equation}
The rotation constraint \eqref{rot-c} simplifies to 
\[
\theta^I\we{\hd} r_I=0.
\]
Note that by virtue of \eqref{a-b}
\[
{\hd}(\theta^I\we{\hd} r_I)=-{\hd}({\hd}r_I\we\theta^I)=-\vth{}^I\lr r_I,
\]
which means that the rotation constraint can be put in the following form similar to \eqref{boost-0}:
\begin{equation}
\vth{}^I\lr r_I=0.
\label{rot-0}
\end{equation}

Since $(\theta^I)$ is a global coframe, the triplet $(\vth{}^I)$ is a global frame and therefore the vector field $\vec{N}$ in the vector constraint \eqref{v(n)-new} can be expressed as $\vec{N}=N_I\vth{}^I$. Then the constraint reads:
\[
d\theta^I\we N_J\vth{}^J\lr r_I+N_J\vth{}^J\lr\theta^I dr_I=N_J(d\theta^I\we \vth{}^J\lr r_I+\delta^{JI} dr_I)=0
\]  
(here we used \eqref{tt-q}). This condition holds for every $\vec{N}$. We can thus drop $N_J$ in the formula above obtaining a linear differential equation imposed on the momentum $r_I$: 
\begin{equation}
dr^J+d\theta^I\we \vth{}^J\lr r_I=0.
\label{vec-0}
\end{equation}

Regarding the scalar constraint \eqref{scal-c}, let us use the boost constraint \eqref{boost-0} to replace the term $\vth{}^I\lr \zeta_I$ by one being a function of $\theta^I$ only. Thus we get
\begin{multline*}
\frac{1}{2}(r_I\we\theta^J)\we{\hd}(r_J\we\theta^I)-\frac{1}{4}(r_I\we\theta^I)\we{\hd}(r_J\we\theta^J)+d(\theta_J\we {\hd}d\theta^J)+\frac{1}{2}d\theta_I\we\theta^J\we{\hd}(d\theta_J\we\theta^I)-\\-\frac{1}{4}d\theta_I\we\theta^I\we{\hd}(d\theta_J\we\theta^J)=0
\end{multline*}
---this expression can be slightly simplified by acting on it by the Hodge operator $\hd$. 

To summarize: by virtue of the gauge condition $\xi^I=0$ the constraints can be expressed in the following form: 
\begin{align}
&\begin{aligned}
&&\vth{}^I\lr\zeta_I+\theta_J\we {\hd}d\theta^J=0, &&\vth{}^I\lr r_I=0, && dr^J+d\theta^I\we \vth{}^J\lr r_I=0,
\end{aligned}\label{brv-c}\\
&[{\hd}(r_I\we\theta^J)][{\hd}(r_J\we\theta^I)]-\frac{1}{2}[{\hd}(r_I\we\theta^I)]^2+\nonumber\\+&2{\hd}d(\theta_J\we {\hd}d\theta^J)+[{\hd}(d\theta_I\we\theta^J)][{\hd}(d\theta_J\we\theta^I)]-\frac{1}{2}[{\hd}(d\theta_I\we\theta^I)]^2=0.
\label{*scal}
\end{align}
Comparing this form with the original one given by Equations \eqref{boost-c}--\eqref{scal-c} we see that, indeed, the gauge $\xi^I=0$ makes the constraints much simpler. Let us emphasize once again that now by virtue of the gauge and Equation \eqref{q-xi}, the coframe $(\theta^I)$ is orthonormal with respect to the spatial metric $q$.  

\section{The constraints in terms of irreducible parts of the variables}

The constraints \eqref{brv-c} and \eqref{*scal}, though simpler than the original ones, are still quite complex. Therefore looking for a way to simplify them further is worth the effort. The natural idea is to take advantage of the fact that in the gauge $\xi^I=0$ the coframe $(\theta^I)$ is orthonormal with respect to the metric $q$ and to express tensor fields appearing in the constraints in terms of their components in this coframe. Moreover, the two-forms $(r^J,d\theta^K)$ can be easily decomposed into irreducible representations of the $SO(3)$ group preserving the metric $q$. Below we will combine this two ideas to cast the constraints in an other form---this will give us general solutions to the boost and rotation constraints and will highlight an algebraic structure of the scalar constraint.         

\subsection{Preliminaries}

As it is said above, we are going to express some tensor fields in terms of their components in the coframe $(\theta^I)$. Consequently, below the components will be labeled by capital Latin indices $I,J,K$ etc.

Let $\eps$ be the volume form on $\Sigma$ given by the spatial metric $q$ and the orientation of the manifold. Since in the gauge $\xi^I=0$ the coframe $(\theta^I)$ is orthonormal with respect to $q$,  the components $(\eps_{IJK})$ of the volume form in the coframe satisfy\footnote{Note that our convention is to use $(\delta^{IJ})$ to raise lower indices $I,J,K,\ldots$. But if $\xi^I=0$, then $(\delta^{IJ})$ are the components of the metric inverse (dual) to $q$ in the coframe $(\theta^I)$. Therefore the formulas \eqref{ee} holds true.}
\begin{align}
\eps^{IJK}\eps_{LMN}&=3!\,\delta^{[I}{}_L\delta^J{}_M\delta^{K]}{}_N, & \eps^{IJK}\eps_{LMK}&=2!\,\delta^{[I}{}_L\delta^{J]}{}_M, & \eps^{IJK}\eps_{LJK}&=2!\,\delta^{I}{}_L
\label{ee}
\end{align}
and
\begin{equation}
\eps_{123}=\sgn(\theta^K).
\label{e-sgn}
\end{equation}
Moreover,
\begin{align}
{\hd}(\theta^I\we\theta^J\we\theta^K)&=\eps^{IJK}, & {\hd}(\theta^I\we\theta^J)&=\eps^{IJ}{}_{K}\theta^K.
\label{star-thetas}
\end{align}

\paragraph{The momenta $(\zeta_I)$} The momentum $\zeta_I$ conjugate to $\xi^I$ is a three-form. Therefore 
\begin{equation}
\zeta_I=\zeta_{IJKL}\,\theta^J\ot\theta^K\ot\theta^L=\frac{1}{3!}\zeta_{IJKL}\,\theta^J\we\theta^K\we\theta^L,
\label{zeta-th}
\end{equation}
where $\zeta_{IJKL}=\zeta_{I[JKL]}$. Using the first equation in \eqref{ee} we obtain
\begin{equation}
\zeta_{IJKL}=\zeta_{IMNP}\,\delta^{[M}{}_{J}\delta^{N}{}_{K}\delta^{P]}{}_{L}=\frac{1}{3!}\zeta_{IMNP}\eps^{MNP}\eps_{JKL}=\frac{1}{3!}\bar{\zeta}_I\eps_{JKL},
\label{zeta-ijkl}
\end{equation}
where
\begin{equation*}
\bar{\zeta}_I:=\zeta_{IMNP}\eps^{MNP}.
\end{equation*}

\paragraph{The momenta $(r_I)$} The momentum $r_I$ conjugate to $\theta^I$ is a two-form. Therefore 
\begin{equation}
r^I=r^{I}{}_{JK}\,\theta^J\ot\theta^K=\frac{1}{2}r^I{}_{JK}\,\theta^J\we\theta^K,
\label{rI-comp}
\end{equation}
where $r^I{}_{JK}=r^I{}_{[JK]}$. Using the second equation in \eqref{ee} we obtain
\begin{equation}
r^I{}_{JK}=r^I{}_{MN}\,\delta^{[M}{}_{J}\delta^{N]}{}_{K}=\frac{1}{2}r^I{}_{MN}\eps^{MNL}\eps_{JKL}=\frac{1}{2}{r}^{IL}\eps_{LJK},
\label{r-re}
\end{equation}
where
\begin{equation}
{r}^{IL}:=r^I{}_{MN}\eps^{MNL}.
\label{rIJ}
\end{equation}
On the other hand, the components $({r}^{IJ})$ admit the following decomposition: 
\begin{equation}
{r}^{IJ}={r}^{[IJ]}+{r}^{(IJ)}={r}^{[IJ]}+\ovcirc{r}^{IJ}+\frac{1}{3}{r}^{K}{}_{K}\delta^{IJ},
\label{r-decomp}
\end{equation}
where $(\ovcirc{r}^{IJ})$ is the traceless symmetric part of $({r}^{IJ})$:
\begin{align*}
\ovcirc{r}^{IJ}&:=r^{(IJ)}-\frac{1}{3}{r}^{K}{}_{K}\delta^{IJ}, & \ovcirc{r}^{IJ}\delta_{IJ}&=0.
\end{align*}

For a fixed point $y\in\Sigma$, we can distinguish a group of linear transformations of the tangent space $T_y\Sigma$, which preserve \emi the scalar product on $T_y\Sigma$ being the value of the metric $q$ at $y$ and \emii the orientation of the tangent space induced by the orientation of $\Sigma$. Since in the basis of $T_y\Sigma$ induced by the coframe $(\theta^I)$ the components of the scalar product are $(\delta_{IJ})$, the group can be described by $SO(3)$ matrices. Let us consider the following representation of $SO(3)$ on tensors of type $\binom{2}{0}$:   
\[
r^{IJ}(y)\mapsto \Lambda^{I}{}_K\Lambda^J{}_L\,r^{KL}(y),
\]
where $(\Lambda^I{}_J)\in SO(3)$. It is well known that Equation \eqref{r-decomp} describes a decomposition of this representation into irreducible ones.

We can also introduce a similar (and equivalent) representation of $SO(3)$ on those tensors of type $\binom{1}{2}$, which are antisymmetric in the two lower indices. A decomposition of this representation into irreducible ones can be obtained from \eqref{r-decomp} by means of \eqref{r-re}: 
\[
r^{I}{}_{JK}=\frac{1}{2}{r}^{[IL]}\eps_{LJK}+\frac{1}{2}\ovcirc{r}^{IL}\eps_{LJK}+\frac{1}{6}{r}^{K}{}_{K}\eps^I{}_{JK}.
\]
We will call the terms at the r.h.s. above antisymmetric, traceless symmetric and trace parts of $(r_I)$.

\paragraph{The derivatives $(d\theta^I)$} Similarly,
\begin{equation}
d\theta^I=\beta^{I}{}_{JK}\,\theta^J\ot\theta^K=\frac{1}{2}\beta^I{}_{JK}\,\theta^J\we\theta^K,
\label{dthI-comp}
\end{equation}
where $\beta^I{}_{JK}=\beta^I{}_{[JK]}$. Therefore
\begin{equation}
\beta^I{}_{JK}=\frac{1}{2}\beta^{IL}\eps_{LJK},
\label{b-be}
\end{equation}
where
\begin{equation}
{\beta}^{IL}:=\beta^I{}_{MN}\eps^{MNL}
\label{betaIJ}
\end{equation}
and
\begin{equation}
{\beta}^{IJ}={\beta}^{[IJ]}+\ovcirc{\beta}^{IJ}+\frac{1}{3}{\beta}^{K}{}_{K}\delta^{IJ},
\label{b-decomp}
\end{equation}
where $(\ovcirc{\beta}^{IJ})$ is the traceless symmetric part of $({\beta}^{IJ})$. The equation above and Equation \eqref{b-be} define the antisymmetric, traceless symmetric and trace parts of $(d\theta^I)$ like in the case of $(r_I)$.

\paragraph{Derivatives in the coframe $(\theta^I)$} Since we are going to carry out many calculations using components of the fields in the coframe $(\theta^I)$, it will be convenient to use in these calculations derivatives of the components with respect to vector fields, which constitute the dual frame. Let then $(\varepsilon_I)$ be the frame dual to $(\theta^I)$, that is, let $\varepsilon_I\lr\theta^J=\delta^J{}_I$. The derivative of a function $f$ with respect to $\varepsilon_I$ will be denoted by $f_{,I}$:        
\begin{equation}
f_{,I}\equiv \varepsilon_If.
\label{f,I}
\end{equation}
Then for any function $f$
\begin{equation}
df=f_{,I}\theta^I.
\label{df}
\end{equation}
Let us note that by virtue of \eqref{tt-q}
\begin{equation}
\varepsilon_I=\vec{\theta}_I,
\label{eps-vecth}
\end{equation}
provided $\xi^I=0$.

\subsection{The boost constraint}

Using \eqref{eps-vecth}, \eqref{zeta-th} and \eqref{zeta-ijkl} we can express the first term of the boost constraint \eqref{boost-0} as follows:
\[
\vec{\theta}^I\lr \zeta_I=\frac{1}{2\cdot3!}\bar{\zeta}^I\eps_{IKL}\theta^K\we\theta^L=\frac{1}{2\cdot3!}\bar{\zeta}_I\eps^{IKL}\theta_K\we\theta_L.
\]
On the other hand, the second term there
\begin{multline}
\theta_K\we{\hd}d\theta^K=\frac{1}{2}\beta^{K}{}_{MN}\theta_K\we {\hd}(\theta^M\we\theta^N)=\frac{1}{2}\beta^{K}{}_{MN}\eps^{MN}{}_L\theta_K\we \theta^L=\frac{1}{2}\beta^{KL}\theta_K\we \theta_L=\\=\frac{1}{2}\beta^{[KL]}\theta_K\we \theta_L
\label{tdt-Ab}
\end{multline}
---here in the second step we applied the second equation in \eqref{star-thetas}, and in the third one Equation \eqref{betaIJ}. Consequently,
\[
\frac{1}{3!}\bar{\zeta}_I\eps^{IKL}+\beta^{[KL]}=0.
\]
This equation by means of the last equation in \eqref{ee} can be transformed into
\begin{equation}
\bar{\zeta}_M+3\,\eps_{MKL}\beta^{[KL]}=0.
\label{boost-sol}
\end{equation}

We can interpret the result as a general solution to the boost constraint: it fixes the momenta $(\zeta_I)$ as a linear function of the antisymmetric part of $(d\theta^I)$ and does not impose any restriction on $\theta^I$. 

\subsection{The rotation constraint}

Using \eqref{eps-vecth}, \eqref{rI-comp} and \eqref{r-re} we obtain from \eqref{rot-0}
\[
  \vec{\theta}^I\lr r_I=r^I{}_{IN}\theta^N=\frac{1}{2}r^{IJ}\eps_{JIN}\theta^N=0.
\]
Thus
\[
r^{IJ}\eps_{JIN}=r^{[IJ]}\eps_{JIN}=0,
\]
which by virtue of the second equation in \eqref{ee} is equivalent to
\begin{equation}
r^{[IJ]}=0.
\label{Ar=0}
\end{equation}

Thus we obtained a general solution to the rotation constraint: it forces the antisymmetric part of $(r_I)$ to vanish. 

\subsection{The vector constraint in terms of $r^{IJ} $ and $\beta^{IJ}$}

Since we solved both boost and rotation constraints, it is enough to restrict our attention to the vector and the scalar constraints. Now we are going to express the constraints in terms of the variables $(r^{IJ})$ and $(\beta^{IJ})$, keeping in mind the result \eqref{Ar=0}.

Let us consider first the vector constraint \eqref{vec-0}---it will be convenient to act on both sides of the constraint by the Hodge operator ${\hd}$: 
\[
{\hd}dr^J+{\hd}(d\theta_I\we \vec{\theta}^J\lr r^I)=0.
\]
Combining \eqref{rI-comp} and \eqref{r-re} we get
\[
r^J=\frac{1}{4}r^{JM}\eps_{MKL}\theta^K\we\theta^L.
\]
Therefore
\begin{multline*}
dr^J=\frac{1}{4}r^{JM}{}_{,N}\,\eps_{MKL}\,\theta^N\we\theta^K\we\theta^L+\frac{1}{2}r^{JM}\eps_{MKL}\,d\theta^K\we\theta^L=\\=\frac{1}{4}r^{JM}{}_{,N}\,\eps_{MKL}\,\theta^N\we\theta^K\we\theta^L+\frac{1}{8}r^{JM}\eps_{MKL}\beta^{KN}\eps_{NPR}\,\theta^P\we\theta^R\we\theta^L,
\end{multline*}
where in the last step we used \eqref{dthI-comp} and \eqref{b-be}. By virtue of \eqref{star-thetas} and \eqref{ee}\[
{\hd}dr^J=\frac{1}{2}r^{JM}{}_{,M}+\frac{1}{4}r^{JM}\eps_{MKL}\beta^{[KL]}.
\]
On the other hand, applying Equations \eqref{tt-q}, \eqref{rI-comp} and \eqref{dthI-comp} we obtain
\[
d\theta_I\we\vec{\theta}^J\lr r^I=\frac{1}{8}\beta_{I}{}^P\eps_{PKL}\,r^{IM}\eps_{M}{}^{J}{}_{N}\,\theta^{K}\we\theta^L\we\theta^N,
\]
hence
\[
{\hd}(d\theta_I\we\vec{\theta^J}\lr r^I)=\frac{1}{4}\beta_I{}^Nr^{IM}\eps_{M}{}^{J}{}_{N}=\frac{1}{4}\beta_I{}^Nr^{IM}\eps^J{}_{NM},
\]
where we used \eqref{star-thetas}. Thus the vector constraint can be expressed in the following form: 
\[
{2}r^{JM}{}_{,M}+r^{JM}\beta^{[KL]}\eps_{MKL}+r^{IM}\beta_I{}^N\eps^J{}_{NM}=0.
\]
In terms of the decompositions \eqref{r-decomp} and \eqref{b-decomp} the constraint reads
\begin{equation}
\frac{2}{3}r^K{}_K{}^{,J}+2\ovcirc{r}^{JM}{}_{,M}+\ovcirc{r}^{JM}\beta^{[KL]}\eps_{MKL}+\ovcirc{r}^{IM}\beta_{[IN]}\eps^{JN}{}_M+\ovcirc{r}^{IM}\ovcirc{\beta}_{IN}\eps^{JN}{}_M=0.
\label{vec-cmp}
\end{equation}

\subsection{The scalar constraint in terms of $r^{IJ} $ and $\beta^{IJ}$}

Let us now turn to the scalar constraint \eqref{*scal}. The zero-form ${\hd}(r_I\we\theta^J)$ appearing in the first two terms of the constraint can be transformed as follows:
\[
{\hd}(r_I\we\theta^J)=\frac{1}{2}r_{IMN}{\hd}(\theta^M\we\theta^N\we\theta^J)=\frac{1}{2}r_{IMN}\eps^{MNJ}=\frac{1}{2}r_I{}^J
\] 
---here in the first step we used \eqref{rI-comp}, in the second one \eqref{star-thetas} and finally in the third one \eqref{rIJ}. Thus
\begin{multline*}
[{\hd}(r_I\we\theta^J)][{\hd}(r_J\we\theta^I)]-\frac{1}{2}[{\hd}(r_I\we\theta^I)]^2=\frac{1}{4}\Big[r_{I}{}^Jr_J{}^I-\frac{1}{2}(r_I{}^I)^2\Big]=\frac{1}{4}\Big[r^{IJ}r_{JI}-\frac{1}{2}(r^K{}_K)^2\Big]=\\=\frac{1}{4}\Big[\Big(r^{[IJ]}+\ovcirc{r}^{IJ}+\frac{1}{3}r^K{}_K\delta^{IJ}\Big)\Big(r_{[JI]}+\ovcirc{r}_{JI}+\frac{1}{3}r^K{}_K\delta_{JI}\Big)-\frac{1}{2}(r^K{}_K)^2\Big]=\\=\frac{1}{4}\Big[-r^{[IJ]}r_{[IJ]}+\ovcirc{r}^{IJ}\ovcirc{r}_{IJ}-\frac{1}{6}(r^K{}_K)^2\Big]. 
\end{multline*}
Note that at every point of the manifold $\Sigma$, the formula above defines a (nondegenerate) quadratic form
\[
(r^{IJ})\mapsto Q(r^{IJ})
\]
of signature $(5,4,0)$. Taking into account the rotation constraint \eqref{Ar=0}, we see that this form reduces to a quadratic form of signature $(5,1,0)$ defined on the symmetric part of $(r^{IJ})$ .  

An analogous calculation leads to a similar expression for the last two terms in \eqref{*scal}:
\[
[{\hd}(d\theta_I\we\theta^J)][{\hd}(d\theta_J\we\theta^I)]-\frac{1}{2}[{\hd}(d\theta_I\we\theta^I)]^2=\frac{1}{4}\Big[-\beta^{[IJ]}\beta_{[IJ]}+\ovcirc{\beta}^{IJ}\ovcirc{\beta}_{IJ}-\frac{1}{6}(\beta^K{}_K)^2\Big].
\]

Finally, the middle term in \eqref{*scal}
\begin{multline}
2{\hd}d(\theta_J\we{\hd}d\theta^J)={\hd}d(\beta^{[KL]}\theta_K\we\theta_L)={\hd}(\beta^{[KL],M}\theta_M\we\theta_K\we\theta_L+\\+\beta^{[KL]}\beta_{KMN}\,\theta^M\we\theta^N\we\theta_L)=(\beta^{[KL],M}\eps_{KLM}+\beta^{[KL]}\beta_{KMN}\eps^{MN}{}_L)=\\=(\beta^{[KL],M}\eps_{KLM}+\beta^{[KL]}\beta_{[KL]}).
\label{beta'}
\end{multline}

Thus the scalar constraint reads
\begin{equation}
\ovcirc{r}^{IJ}\ovcirc{r}_{IJ}-\frac{1}{6}(r^K{}_K)^2+{4}\beta^{[KL],M}\eps_{KLM}+3\beta^{[IJ]}\beta_{[IJ]}+\ovcirc{\beta}^{IJ}\ovcirc{\beta}_{IJ}-\frac{1}{6}(\beta^K{}_K)^2=0
\label{scal-cmp}
\end{equation}
and it has a transparent algebraic form: the constraint is a sum of a term linear in derivatives of $\beta^{[IJ]}$ and quadratic forms of $r^I$ and $d\theta^I$. Both quadratic forms are diagonal when expressed in terms of the components of the irreducible parts of the variables.

\section{Further simplification of the constraints? \label{gauge?}}

Since the boost and rotation constraints have been solved, we are left with the vector and the scalar constraints
\eqref{vec-cmp} and \eqref{scal-cmp}. Can this two constraints be simplified even further? Let us note that the only second order term in the constraints is that in the scalar constraint, which is linear in derivatives of $\beta^{[IJ]}$ and it would be nice to be able to reduce the order of the constraints by annihilating this term.  Perhaps this can be done by means of the remaining gauge freedom. Of course, we would like to achieve this {\em without spoiling the gauge} $\xi^I=0$.

The term, we would like to annihilate, originates from the three-form $d(\theta_I\we\hd d\theta^I)$ (see Equations \eqref{tdt-Ab} and \eqref{beta'}). Therefore it seems reasonable to check, if either  $d(\theta_I\we\hd d\theta^I)$ or $\theta_I\we\hd d\theta^I$ can be gauge-transformed to zero. Below we will focus on the latter alternative because of two reasons. First, this alternative leads to first order PDE's imposed on parameters of gauge transformations, while the former one leads to second order equations. Second, by virtue of \eqref{tdt-Ab} the latter alternative is equivalent to    
\begin{equation}
\beta^{[KL]}=0,
\label{Ab=0}
\end{equation}
which would lead to a greater simplification of the constraints \eqref{vec-cmp} and \eqref{scal-cmp} than the former alternative (compare the l.h.s of \eqref{Ab=0} with the r.h.s. of \eqref{beta'}).

The question, we would like to answer, can be now formulated precisely as follows: given any point $(\zeta^0_I,r^0_J,\xi^K_0=0,\theta^L_0)$ of the phase space, is it possible to gauge-transform it to a point, where the antisymmetric part of $d\theta^L$ is zero, keeping at the same time the gauge $\xi^K=0$ intact? Currently we are unable to provide a definite solution to this problem and therefore we will present below merely some partial results.             

\subsection{$SO(3)$ gauge transformations}

Since the boost constraint \eqref{boost-c} acts non-trivially on $\xi^I$, it would be rather difficult to use it to annihilate $\beta^{[IJ]}$ without spoiling the gauge $\xi^I=0$. But the rotation constraint \eqref{rot-c} does not depend on the momenta $(\zeta_I)$ conjugate to $(\xi^I)$ and therefore gauge transformations generated by the constraint do not change the functions $(\xi^I)$. On the other hand, the constraint generates local $SO(3)$ transformations of the (orthonormal) coframe $(\theta^I)$ \cite{oko-tegr-I}. It is thus reasonable to check if the condition $\beta^{[IJ]}=0$ can be achieved by these transformations applied to an arbitrary coframe $(\theta^I)$. 

To avoid any confusion let as emphasize that the components $(\beta^{[IJ]})$ describe an irreducible subspace of the $SO(3)$ representation acting on the components of $(d\theta^I)$. Obviously, this representation cannot be used to annihilate the antisymmetric part of $(d\theta^I)$. Therefore below we will consider the action of $SO(3)$ on $(\theta^I)$ generated by the rotation constraint, which will give us an affine transformation of $(\beta^{[IJ]})$.         

To begin the analysis of the problem, let us assume that the gauge $\xi^I=0$ holds and let us describe the gauge transformations of the coframe $(\theta^I)$ generated by the rotation constraint. To this end consider a map
\begin{equation}
\R\supset\,]\lambda_1,\lambda_2[\,\ni\lambda\to b(\lambda)\in\Omega^1({\Sigma})
\label{l-bl}
\end{equation}
such that for every $\la$ the one-form $b(\lambda)$ is independent of the canonical variables. The transformations are given by the following differential equations (treated pointwisely with respect to the manifold $\Sigma$):
\[
\frac{d\theta^I}{d\lambda}=\{\theta^I,R(b(\lambda))\}=\frac{\delta R(b(\lambda))}{\delta r_I}={\hd}\big(b(\lambda)\we\theta^I).
\]
Decomposing $b(\lambda)$ in the coframe $(\theta^I)$ and using \eqref{star-thetas} we obtain
\begin{equation}
\frac{d\theta^I}{d\lambda}={\hd}\big(b_J(\lambda)\,\theta^J\we\theta^I)=b_J(\lambda)\,\eps^{JI}{}_K\theta^K\equiv \Ec^I{}_K(\lambda)\,\theta^K.
\label{R-theta}
\end{equation}

We thus see that the gauge transformations are given by a system of linear differential equations imposed on values of the variables $(\theta^I)$ at every point of $\Sigma$. Let $\lambda\mapsto \tilde{\theta}^I(\lambda)$ be the solution of \eqref{R-theta} with the initial condition $\tilde{\theta}(0)=\theta^I_0$. Then    
\begin{equation}
\tilde{\theta}^I(\lambda)=\Lambda^I{}_J(\lambda)\,\theta^J_0,
\label{t-Lt}
\end{equation}
where $\big(\Lambda^I{}_J(\lambda)\big)$ is a field on $\Sigma$ valued in $3\times 3$ real matrices. If $\Ec(\lambda)$ denotes a linear operator on $\mathbb{E}$ of components $\big(\Ec^I{}_J(\la)\big)$, then
\begin{equation}
\Lambda^I{}_J(\lambda)={\Big[\Exp\Big(\int_{0}^\lambda \Ec(\lambda')\,d\lambda'\Big)\Big]^I}_J,
\label{Exp}
\end{equation}
where $\Exp$ denotes the ordered exponential. 

Let us consider now a Riemannian metric given by the transformed coframe $(\tilde{\theta}^I(\la))$:
\begin{equation}
\tilde{q}(\la):=\delta_{IJ}\,\tilde{\theta}^I(\la)\ot\tilde{\theta}^J(\la)=\delta_{IJ}\Lambda^I{}_K(\la)\Lambda^J{}_L(\la)\,\theta^K_0\ot\theta^L_0.
\label{q-la}
\end{equation}
Using \eqref{R-theta} we obtain
\[
\frac{d \tilde{q}}{d\la}=\delta_{IJ}\Big(\frac{d\tilde{\theta}^I}{d\lambda}\ot\tilde{\theta}^J+\tilde{\theta}^I\ot \frac{d\tilde{\theta}^J}{d\lambda}\Big)=\Ec_{JK}\tilde{\theta}^K\ot\tilde{\theta}^J+\Ec_{IK}\tilde{\theta}^I\ot\tilde{\theta}^K=\Ec_{JK}(\tilde{\theta}^K\ot\tilde{\theta}^J+\tilde{\theta}^J\ot\tilde{\theta}^K)=0,
\]
since $\Ec_{JK}=-\Ec_{KJ}$. Thus the map $\la\to\tilde{q}(\lambda)$ is constant and therefore $\tilde{q}(\la)=\tilde{q}(0)$. This conclusion can be reexpressed by means of  \eqref{q-la} in the following way: 
\[
\delta_{IJ}\Lambda^I{}_K(\la)\Lambda^J{}_L(\la)\,\theta^K_0\ot\theta^L_0=\delta_{KL}\,\theta^K_0\ot\theta^L_0,
\]
which means that
\begin{equation}
\delta_{IJ}\Lambda^I{}_K(\la)\Lambda^J{}_L(\la)=\delta_{KL}.
\label{dLL=d}
\end{equation}
Consequently, for every $\la$ the field $\big(\Lambda^I{}_K(\la)\big)$ is valued in $SO(3)$ matrices. This justifies the statement that the rotation constraint \eqref{rot-c} generates local $SO(3)$ transformations of the coframe $(\theta^I)$, where ``local'' means ``dependent on a point of $\Sigma$''.     

\subsection{Gauge transformation of antisymmetric part of $(d\theta^I_0)$}

Now we are going to derive a formula, which describes the corresponding gauge transformation of the antisymmetric part of $(d\theta^I_0)$. Let us then simplify the notation:
\begin{align*}
\tilde{\theta}^I(\la)&\equiv\tilde{\theta}^I, & \Lambda^I{}_{J}(\la)&\equiv \Lambda^I{}_{J}
\end{align*}
and consider the following terms:
\begin{align}
&{\hd}(\tilde{\theta}_{K}\we{\hd}d\tilde{\theta}^K)=\frac{1}{2}\tilde{\beta}^{[KL]}\eps_{KLM}\,\tilde{\theta}^M\equiv\tilde{\beta}^M\,\tilde{\theta}_{M},\label{tb_S}\\
  &{\hd}(\theta_{0K}\we{\hd}d\theta^K_0)=\frac{1}{2}\beta^{[KL]}\eps_{KLM}\,\theta_0^M\equiv\beta^M\,\theta_{0M},
\label{b0_S}
\end{align}
---these two equations were obtained by means of \eqref{tdt-Ab} and \eqref{star-thetas}. Inserting \eqref{t-Lt} into \eqref{tb_S} we get
\begin{multline}
\tilde{\beta}^M\,\tilde{\theta}_{M}={\hd}(\delta_{IJ}\tilde{\theta}^I\we{\hd}d\tilde{\theta}^J)=\delta_{IJ}{\hd}\big(\Lambda^I{}_{M}{\theta}^M_0\we{\hd}d(\Lambda^J{}_N{\theta}_0^N)\big)=\\=\delta_{IJ}{\hd}\big(\Lambda^I{}_{M}{\theta}_0^M\we{\hd}(\Lambda^J{}_{N,L}\theta^L_0\we{\theta}_0^N+\Lambda^J{}_Nd\theta_0^N)\big)=\\=\delta_{IJ}\Lambda^I{}_{M}\Lambda^J{}_{N,L}{\hd}\big({\theta}_0^M\we{\hd}(\theta_0^L\we{\theta}_0^N)\big)+\delta_{IJ}\Lambda^I{}_{M}\Lambda^J{}_{N}{\hd}\big({\theta}_0^M\we{\hd}d\theta_0^N\big)=\\=\delta_{IJ}\Lambda^I{}_{M}\Lambda^J{}_{N,L}\eps^{LN}{}_K\eps^{MK}{}_P\,\theta_0^P+\beta^P\,\theta_{0P}
\label{LL}
\end{multline}
---here in the second step we applied \eqref{df} and in the last step we used twice \eqref{star-thetas}, the condition \eqref{dLL=d} and Equation \eqref{b0_S}. Equation \eqref{t-Lt} allows us to express \eqref{LL} as follows:
\begin{equation}
\tilde{\beta}^M\Lambda_{M}{}^P=\delta_{IJ}\Lambda^I{}_{M}\Lambda^J{}_{N,L}\eps^{LN}{}_K\eps^{MKP}+\beta^P.
\label{aff}
\end{equation}
We thus see that the variables $\tilde{\beta}^M$ and $\beta^M$, and thereby the antisymmetric parts of $(d\tilde{\theta}^I)$ and $(d\theta^I_0)$, are related by an affine transformation, as mentioned earlier.    

Suppose now that for a certain $\la_1$ the antisymmetric part of $(d\tilde{\theta}^I)$ is zero everywhere on $\Sigma$. Then Equation \eqref{aff} transforms into the following system of quadratic PDE's imposed on components of the $SO(3)$ matrix $(\Lambda^I{}_J)\equiv\big(\Lambda^I{}_J(\la_1)\big)$:
\begin{multline}
0=\delta_{IJ}\Lambda^I{}_{M}\Lambda^J{}_{N,L}\eps^{LN}{}_K\eps^{MKP}+\beta^P=\\=-\delta_{IJ}\Lambda^I{}_{M}\Lambda^J{}_{N,L}\delta^{LM}\delta^{NP}+\delta_{IJ}\Lambda^I{}_{M}\Lambda^J{}_{N,L}\delta^{LP}\delta^{NM}+\beta^P
\label{qpde}
\end{multline}
---the second equality here holds by virtue of the second identity in \eqref{ee}. 

The system above can be easily reduced to a system of linear PDE's. To this end let us note that Equation \eqref{dLL=d} implies the following identity:
\[
3=\delta_{IJ}\Lambda^I{}_{M}\Lambda^J{}_{N}\delta^{MN},
\]
hence
\[
0=(\delta_{IJ}\Lambda^I{}_{M}\Lambda^J{}_{N}\delta^{MN})_{,L}=2\delta_{IJ}\Lambda^I{}_{M}\Lambda^J{}_{N,L}\delta^{MN}.
\]
Thus the second term at the r.h.s. of \eqref{qpde} vanishes. On the other hand, differentiating \eqref{dLL=d}, we obtain
\[
\delta_{IJ}\Lambda^I{}_{M,L}\Lambda^J{}_{N}+\delta_{IJ}\Lambda^I_{M}\Lambda^J{}_{N,L}=0.
\]
Using this result we rewrite \eqref{qpde} as follows
\begin{equation}
\Lambda^I{}_{M,L}\delta^{LM}(\delta_{IJ}\Lambda^J{}_{N}\delta^{NP})+\beta^P=0.
\label{qpde-1}
\end{equation}
By virtue of \eqref{dLL=d} for each $SO(3)$ matrix
\[
\delta_{IJ}\Lambda^J{}_{N}\delta^{NP}=\Lambda^{-1P}{}_I.
\]
Therefore \eqref{qpde-1} can be transformed to
\begin{equation}
\Lambda^K{}_{M,L}\delta^{ML}+\Lambda^K{}_I\beta^I=0,
\label{lin}
\end{equation}
which is the desired system of linear PDE's equivalent to \eqref{qpde}. Note that we obtained a system of three decoupled equations labeled by the upper index $K$---each equation is imposed on three components $\Lambda^K{}_1$, $\Lambda^K{}_2$ and $\Lambda^K{}_3$ with the same $K$.

Thus in order to achieve \eqref{Ab=0} by means of the gauge transformation, we need to show that for every coframe $(\theta^I_0)$ there exists a curve \eqref{l-bl} such that for some $\la_1$ the corresponding matrix-valued field  $\big(\Lambda^I{}_J(\la_1)\big)$ satisfies \eqref{lin} (the curve defines the field via Equations \eqref{Exp} and \eqref{R-theta}).   

Since we are not able to show this, we will consider a similar but simpler problem: given Riemannian metric $q$ on $\Sigma$, does there exist a (local) orthonormal coframe of $q$, for which $\beta^{[IJ]}=0$? In this case, too, we are unable to give a definite answer, but at least we will able to take the analysis of the problem a few steps further.

To make this task easier, given metric $q$, we will choose a special coframe $(\theta^I_0)$ orthonormal w.r.t. $q$, distinguished by its simplicity. Then will look for an $SO(3)$-valued  field $(\Lambda^I{}_J)$ such that the antisymmetric part of $\big(d(\Lambda^I{}_J\theta^J_0)\big)$ vanishes. In other words, we will look for a field $(\Lambda^I{}_J)$, which satisfies \eqref{dLL=d} and \eqref{lin} with $(\beta^I)$ given by that simple coframe.

\subsection{Darboux coframe}

The Darboux theorem \cite{darboux} guarantees that every Riemannian metric $q$ defined on a {\em three-dimensional} manifold can always be expressed in a diagonal form. More precisely, around every point of such a manifold there exists a local coordinate system $(x,y,z)$ such that
\begin{equation}
q=A^2dx^2+B^2dy^2+C^2dz^2,
\label{q-dar}
\end{equation}
where $A$, $B$ and $C$ are functions of non-zero values, defined on the domain of the coordinate system. Consequently, 
\begin{align}
\theta^1&=Adx, & \theta^2&=Bdy, & \theta^3&=Cdz,
\label{ansatz-1}
\end{align}
is an orthonormal coframe of $q$. Any (orthonormal) coframe of this form will be called {\em Darboux coframe} ({\em of} $q$).

To analyze Equation \eqref{lin}, we will assume that the coframe $(\theta^I_0)$, which defines $(\beta^I)$ in the equation, is a Darboux coframe of $q$. Its dual frame, which defines the derivatives \eqref{f,I} in Equation \eqref{lin}, reads
\begin{align}
\varepsilon_1&=\frac{1}{A}\partial_x, & \varepsilon_2&=\frac{1}{B}\partial_y, & \varepsilon_3&=\frac{1}{C}\partial_z.
\label{dual-eps}
\end{align}

Now it is easy to calculate the exterior derivatives of the one-forms \eqref{ansatz-1}:
\begin{align*}
d\theta^1&=-\frac{A_{,y}}{AB}\theta^1\we\theta^2+\frac{A_{,z}}{CA}\theta^3\we\theta^1,\\
d\theta^2&=\frac{B_{,x}}{AB}\theta^1\we\theta^2-\frac{B_{,z}}{BC}\theta^2\we\theta^3,\\
d\theta^3&=-\frac{C_{,x}}{CA}\theta^3\we\theta^1+\frac{C_{,y}}{BC}\theta^2\we\theta^3.
\end{align*}
Taking into account Equations \eqref{betaIJ} and \eqref{e-sgn}, it is convenient to denote
\[
\sgn(\theta^K)\beta^{IJ}\equiv\bar{\beta}^{IJ}.
\]
It follows from the expressions for $d\theta^I$ above that 
\begin{equation}
\Big(\bar{\beta}^{IJ}\Big)=2
\begin{pmatrix}
0 & \dfrac{A_{,z}}{AC} & -\dfrac{A_{,y}}{AB} \medskip\\
-\dfrac{B_{,z}}{BC} & 0 &  \dfrac{B_{,x}}{BA}\medskip\\
\dfrac{C_{,y}}{CB} & -\dfrac{C_{,x}}{CA} & 0
\end{pmatrix}=
2\begin{pmatrix}
0 & \dfrac{(\ln|A|)_{,z}}{C} & -\dfrac{(\ln|A|)_{,y}}{B} \medskip\\
-\dfrac{(\ln|B|)_{,z}}{C} & 0 &  \dfrac{(\ln|B|)_{,x}}{A}\medskip\\
\dfrac{(\ln|C|)_{,y}}{B} & -\dfrac{(\ln|C|)_{,x}}{A} & 0
\end{pmatrix}
\label{b-matr}
\end{equation}
Consequently,
\begin{align}
\bar{\beta}^{[12]}&=\frac{(\ln|A|+\ln|B|)_{,z}}{C}, & \bar{\beta}^{[23]}&=\frac{(\ln|B|+\ln|C|)_{,x}}{A}, & \bar{\beta}^{[31]}&=\frac{(\ln|C|+\ln|A|)_{,y}}{B}.\label{Ab}
\end{align}
and
\begin{align}
\bar{\beta}^1&=\frac{(\ln|BC|)_{,x}}{A}, & \bar{\beta}^2&=\frac{(\ln|CA|)_{,y}}{B}, & \bar{\beta}^3&=\frac{(\ln|AB|)_{,z}}{C}
\label{bbar}
\end{align}
(see \eqref{b0_S} for the relation between $\beta^{[IJ]}$ and $\beta^K$).

To simplify the considerations below let us note that Equation \eqref{lin} is invariant with respect to the transformation $(\theta^I_0)\mapsto (-\theta^I_0)$. Indeed, the dual frame changes then according to $(\varepsilon_I)\mapsto (-\varepsilon_I)$, which results in the change in the sign of the first term in \eqref{lin}. On the other hand, the transformation of the coframe changes the sign of the components $(\beta^I{}_{JK})$ and $(\eps_{IJK})$---see Equations \eqref{dthI-comp} and \eqref{e-sgn}. This in turn, by virtue of Equations \eqref{betaIJ} and \eqref{b0_S}, changes the sign of $(\beta^I)$ appearing in the second term in \eqref{lin}. Thus without loss of generality we can assume that $(\theta^I_0)$ is a Darboux coframe compatible with the orientation of $\Sigma$ and then drop the bar over $\beta^I$ in  \eqref{bbar}. 

\subsection{Equations \eqref{dLL=d} and \eqref{lin} in the case of the Darboux coframe}

Now, inserting \eqref{dual-eps} and \eqref{bbar} into \eqref{lin} we obtain: 
\begin{equation}
\frac{\Lambda^K{}_{1,x}}{A}+\frac{\Lambda^K{}_{2,y}}{B}+\frac{\Lambda^K{}_{3,z}}{C}+\Lambda^K{}_1\frac{(\ln|BC|)_{,x}}{A}+\Lambda^K{}_2\frac{(\ln|CA|)_{,y}}{B}+ \Lambda^K{}_3 \frac{(\ln|AB|)_{,z}}{C}=0.
\label{lin-expl}
\end{equation}
Multiplying both sides of the equation above by $ABC$ gives us
\begin{equation}
(\Lambda^K{}_1BC)_{,x}+(\Lambda^K{}_2CA)_{,y}+(\Lambda^K{}_3AB)_{,z}=0.
\label{div-b}
\end{equation}
To proceed further let us introduce a non-singular matrix
\[
(\Phi^I{}_J):={\rm diag}(BC,CA,AB)
\]
and let 
\[
\bar{\Lambda}^I{}_J:=\Lambda^I{}_K\Phi^K{}_J.
\]
Then Equations \eqref{dLL=d} and \eqref{div-b} are equivalent to, respectively,
\begin{align}
\bar{\Lambda}^I{}_M\bar{\Lambda}^J{}_N\delta_{IJ}&=\delta_{KL}\Phi^{K}{}_M\Phi^L{}_N, \label{LL-FF}\\
\bar{\Lambda}^K{}_{1,x}+\bar{\Lambda}^K{}_{2,y}+\bar{\Lambda}^K{}_{3,z}&=0.\label{dbarL}
\end{align}

The last equation can be easily solved: if
\[
\Omega^K:=\bar{\Lambda}^K{}_{1}\,dy\we dz + \bar{\Lambda}^K{}_{2}\,dz\we dx + \bar{\Lambda}^K{}_{3}\,dx\we dy,
\]
then Equation \eqref{dbarL} can be rewritten as
\[
d\Omega^K=0.
\]
By virtue of the Poincar\'e lemma this is equivalent to the existence of a one-form
\[
\omega^K=\omega^K{}_1\,dx+\omega^K{}_2\,dy+\omega^K{}_3\,dz
\]
(defined on a simply connected open subset of $\Sigma$) such that 
\[
\Omega^K=d\omega^K.
\]
This fact gives us a general solution to \eqref{dbarL}
\begin{align}
\bar{\Lambda}^K{}_1&=\omega^K{}_{3,y}-\omega^K{}_{2,z},&
\bar{\Lambda}^K{}_2&=\omega^K{}_{1,z}-\omega^K{}_{3,x},&
\bar{\Lambda}^K{}_3&=\omega^K{}_{2,x}-\omega^K{}_{1,y}.
\label{bL-om}
\end{align}
where $(\omega^K{}_L)$ are arbitrary functions.

Inserting $(\bar{\Lambda}^I{}_J)$ of the form above into \eqref{LL-FF} we obtain a system of quadratic first order PDE's imposed on the functions $(\omega^K{}_L)$---if there exist solutions to this system, then there exists an orthonormal coframe $(\tilde{\theta}^I)$ of $q$ such that  its exterior derivative $(d\tilde{\theta}^I)$ has a vanishing antisymmetric part.

To express the system in a more compact way let us introduce an auxiliary metric $\check{q}$ and its volume form $\check{\eps}$ on the domain of the coordinates $(x,y,z)\equiv(x^1,x^2,x^3)\equiv(x^I)$: 
\begin{align*}
\check{q}&:=dx^2+dy^2+dz^2, & \check{\eps}&:=dx\we dy\we dz.
\end{align*}
Then \eqref{bL-om} is equivalent  to
\[
\bar{\Lambda}^I{}_J\,dx^J=\check{\hd}\,d\omega^I,
\]
where $\check{\hd}$ denotes the Hodge dualization given by $\check{q}$ and $\check{\eps}$. Substituting this into Equation \eqref{LL-FF} contracted with $dx^M\ot dx^N$ gives an alternative form of the system under consideration:
\begin{multline}
\delta_{IJ}\,(\check{\hd}\,d\omega^I)\ot(\check{\hd}\,d\omega^J)=(BC)^2dx\ot dx + (CA)^2dy\ot dy + (AB)^2dz\ot dz=\\=q_{yy}q_{zz}\,dx\ot dx + q_{zz}q_{xx}\,dy\ot dy + q_{xx}q_{yy}\,dz\ot dz
\label{om2-ABC}
\end{multline}
---this is a tensor equation imposed on the three unknown one-forms $(\omega^I)$, where the functions $q_{xx}=A^2$, $q_{yy}=B^2$ and $q_{zz}=C^2$ are the diagonal components of the diagonal metric \eqref{q-dar}.

So far we have not been able to prove the existence of solutions to the equation above for arbitrary diagonal metric $q$. However, both sides of \eqref{om2-ABC} are symmetric (local) tensor fields on the three-dimensional manifold $\Sigma$ and therefore \eqref{om2-ABC} is in fact a system of six equations imposed on nine independent components $(\omega^K{}_L)$. This suggests that it is possible to solve this system, but obviously such a naive counting cannot be treated as a proof.

Note moreover that the diagonal components of the matrix \eqref{b-matr} are zero. This means that for every Riemannian metric $q$ on $\Sigma$ there exists a special (local) orthonormal coframe $(\theta^I)$, that is, a Darboux coframe, such that three of nine components of $(\beta^{IJ})$ vanish. Thus it is perhaps possible to find an other special orthonormal coframe of $q$, for which three independent off-diagonal components of $(\beta^{[IJ]})$ do vanish.

\section{Solutions to the constraints}

Since we managed to reduce the complexity of the TEGR constraints to some extent, we can attempt to look for exact solutions to them. Let us recall that the boost and rotation constraints have been solved (see Equations \eqref{boost-sol} and \eqref{Ar=0}) and we are left with the vector and scalar constraints. Obviously, the latter constraints are still too complex to find their general solutions and therefore we will present just a few very specific ones. 

To derive the solutions we will keep the gauge $\xi^I=0$ and assume moreover that
\begin{enumerate}
\item $(\theta^I)$ is a Darboux coframe, 
\item the antisymmetric part of $(d\theta^I)$ vanishes, 
\item the traceless symmetric part of the momentum $(r^I)$ is zero. 
\end{enumerate}
Then we will find all Darboux coframes, for which $\beta^{[IJ]}=0$. We will thus obtain a class of coframes of fairly simple exterior derivatives---it follows from \eqref{b-matr} that within this class merely three off-diagonal components of $\beta^{(IJ)}$ will be non-zero.

On the other hand, the assumption $\ovcirc{r}^{IJ}=0$ will reduce the vector constraint to a system of trivial PDE's of obvious general solution. 

The results obtained in the previous steps will allow us to transform the remaining scalar constraint into a non-linear first-order PDE imposed on three functions, which  parameterize the class of Darboux coframes described above. Finally, we will find some exact solutions to this equation.

In total, we will obtain three families of exact solutions to all the constraints. For all these solutions both the functions $(\xi^I)$ and their conjugate momenta $(\zeta_I)$ will be zero, while the momenta $(r_I)$ will be non-zero---thus each solution will be specified by a coframe $(\theta^I)$ and its conjugate momenta $(r_I)$.

\subsection{Darboux coframes and $\beta^{[IJ]}=0$ }

Suppose then that $(\theta^I)$ is a Darboux coframe \eqref{ansatz-1} such that the antisymmetric part of $(d\theta^I)$ is zero. The components of this part for this coframe are given by Equations \eqref{Ab}, hence we have

\[
(\ln|A|+\ln|B|),_{z}=(\ln|B|+\ln|C|),_{x}=(\ln|C|+\ln|A|),_{y}=0.
\]
This means that
\begin{align}
\ln|A|+\ln|B|&=-2\ln|F|, & \ln|B|+\ln|C|&=-2\ln|G|, & \ln|C|+\ln|A|&=-2\ln |H|,
\label{lnln}
\end{align}
where $F$, $G$ and $H$ are functions of non-zero values such that \emi $F$ does not depend on $z$, \emii $G$ does not on $x$ and \emiii $H$ does not on $y$:
\begin{align*}
F&=F(x,y), & G&=G(y,z), & H&=H(x,z). 
\end{align*}
Equations \eqref{lnln} form a system of linear equations for $\ln|A|$, $\ln|B|$ and $\ln|C|$, which can be easily solved:
\begin{align*}
\ln|A|&=-\ln|F|+\ln|G|-\ln|H|,\\
\ln|B|&=-\ln|F|-\ln|G|+\ln|H|,\\
\ln|C|&= \ln|F|-\ln|G|-\ln|H|.
\end{align*}
Consequently,
\begin{align}
A&=\pm\frac{G(y,z)}{F(x,y)H(x,z)}, & B&=\pm\frac{H(x,z)}{G(y,z)F(x,y)}, & C&=\pm\frac{F(x,y)}{H(x,z)G(y,z)}.
\label{ABC}
\end{align}

We thus see that each Darboux coframe, whose exterior derivative is lacking its antisymmetric part, is of the form \eqref{ansatz-1} with the functions $A$, $B$ and $C$ given by Equations \eqref{ABC}.    

Let us now describe the symmetric part of $(d\theta^I)$. By virtue of \eqref{b-matr}   
\begin{align}
\bar{\beta}^{(12)}&=\frac{(\ln|A|-\ln|B|)_{,z}}{C}, & \bar{\beta}^{(23)}&=\frac{(\ln|B|-\ln|C|)_{,x}}{A}, & \bar{\beta}^{(31)}&=\frac{(\ln|C|-\ln|A|)_{,y}}{B}\label{Sb}
\end{align}
and
\begin{equation}
\bar{\beta}^{(11)}=\bar{\beta}^{(22)}=\bar{\beta}^{(33)}=\bar{\beta}^{K}{}_K=0.
\label{diag=0}
\end{equation}
Inserting \eqref{ABC} into \eqref{Sb} we obtain
\begin{align}
\bar{\beta}^{(12)}&=\pm 2\frac{G_{,z}H-H_{,z}G}{F}, & \bar{\beta}^{(23)}&=\pm2\frac{H_{,x}F-F_{,x}H}{G}, & \bar{\beta}^{(31)}&=\pm2\frac{F_{,y}G-G_{,y}F}{H},
\label{beta-sym}
\end{align}
where $\bar{\beta}^{(12)}$ inherits the sign from the function $C$, $\bar{\beta}^{(23)}$ from $A$ and $\bar{\beta}^{(31)}$ from $B$.

\subsection{Solving the vector constraint}

If $\ovcirc{r}^{IJ}=0$, then the vector constraint \eqref{vec-cmp} reduces to  
\[
r^K{}_K{}^{,J}=0.
\]
Thus the derivatives of the trace $r^K{}_K$ along all the vector fields \eqref{dual-eps} are zero. This means that the trace $r^K{}_K$ is constant on $\Sigma$:  
\begin{equation}
r^K{}_K\equiv 4\sqrt{3}\kappa= {\rm const.}
\label{r-tr}
\end{equation}

\subsection{Solving the scalar constraint}

Taking into account \emi the assumptions $\beta^{[IJ]}=0=\ovcirc{r}^{IJ}$ and \emii Equations \eqref{diag=0}, \eqref{beta-sym} and \eqref{r-tr}, we see that the scalar constraint \eqref{scal-cmp} takes the following form:
\begin{multline}
\frac{1}{4}\Big[\big(\beta^{(12)}\big)^2+\big(\beta^{(23)}\big)^2+\big(\beta^{(31)}\big)^2\Big]=\\=\frac{(G_{,z}H-H_{,z}G)^2}{F^2}+\frac{(H_{,x}F-F_{,x}H)^2}{G^2}+\frac{(F_{,y}G-G_{,y}F)^2}{H^2}=\kappa^2.
\label{pde-k2}
\end{multline}

Before we will present some solutions to the constraint \eqref{pde-k2}, let us first exclude those, which correspond in a quite simple way to the Minkowski spacetime. Note that if $\kappa=0$, then \emi \eqref{pde-k2} forces the only non-zero components of $(d\theta^I)$, that is, the off-diagonal components of $\beta^{(IJ)}$ to vanish and \emii \eqref{r-tr} forces the only non-zero part of the momenta $(r_I)$ to vanish. Consequently, $d\theta^I=0$ and $r_I=0$. The orthonormal frame $(\theta^I)$ is thus holonomic and the spatial metric $q$ is flat. On the other hand, if $\beta^{[IJ]}=0$, then the momenta $(\zeta_I)$ are zero by virtue of \eqref{boost-sol}. Obviously, the functions $(\xi^I)$ are still zero. It is easy to check that these canonical variables can be obtained on any spatial three-dimensional hyperplane in the Minkowski spacetime.

An example of a non-trivial solution of \eqref{pde-k2}, which corresponds to the Minkowski spacetime, reads  
\begin{align*}
F(x,y)&=a(x)b(y), & G(y,z)&=b(y)c(z), & H(x,z)&=c(z)a(x),
\end{align*}
where $a$, $b$ and $c$ are non-zero functions of one real variable. Then
\[
{G_{,z}H-H_{,z}G}={H_{,x}F-F_{,x}H}={F_{,y}G-G_{,y}F}=0
\]
and, consequently, $\kappa=0$.

Thus in the sequel we will assume that $\kappa$ is non-zero. However, we do not claim that in this way we have excluded all solutions to \eqref{pde-k2}, which correspond to the Minkowski spacetime. 

Let us finally present a few solutions to Equation \eqref{pde-k2}.

\paragraph{Example 1.} Suppose that $F$ is constant and that $G$ and $H$ depend merely on the coordinate $z$. Then \eqref{pde-k2} reduces to
\[
G'H-H'G=\pm|F\kappa|,
\]
where the prime denotes the derivative with respect to $z$. Hence we have
\[
G'=\frac{H'}{H}G\pm\frac{|F\kappa|}{H}.
\]
Now let us choose $H$ to be any function of non-zero values and threat the equation above as a non-homogeneous linear differential equation imposed on $G$. A general solution to it is
\begin{equation}
G(z)=\pm|F\kappa|H(z)\int \big(H(z)\big)^{-2}\,dz,
\label{Gz}
\end{equation}
where the integration constant is hidden in the integral.

\paragraph{Example 2.} Let 
\begin{align}
F(x,y)&=a(x)-b(y), & G(y,z)&=b(y)-c(z), & H(x,z)&=c(z)-a(x),
\label{minus}
\end{align}
where $a$, $b$ and $c$ are functions of one real variable. Then \eqref{pde-k2} simplifies to
\[
(c'(z))^2+(a'(x))^2+(b'(y))^2=\kappa^2\neq 0.
\]
This means that each derivative $a'$, $b'$ and $c'$ has to be constant. Consequently, 
\begin{align}
a(x)&=a_1x+a_0, & b(y)&=b_1y+b_0, & c(z)&=c_1z+c_0, 
\label{abc}
\end{align}
where $a_i$, $b_i$ and $c_i$ are constants such that
\begin{equation}
a^2_1+b^2_1+c^2_1=\kappa^2\neq 0.
\label{abc-kappa}
\end{equation}

\paragraph{Example 3.} Suppose now that in Equation \eqref{pde-k2} the functions $G$ and $H$ are constant and non-zero. Then the equation reduces to a two-dimensional one imposed on $F(x,y)$:
\begin{equation}
(F_{,x})^2\frac{H^2}{G^2}+(F_{,y})^2\frac{G^2}{H^2}=\kappa^2.
\label{F2}
\end{equation}
It is convenient to introduce two new coordinate systems $(\bar{x},\bar{y})$ and $(r,\varphi)$ such that
\begin{align}
\bar{x}&:=\kappa\frac{G}{H}x, & \bar{y}&:=\kappa\frac{H}{G}y,\label{bxby}\\
\bar{x}&=:r\cos\varphi, & \bar{y}&=:r\sin\varphi \nonumber
\end{align}
Then Equation \eqref{F2} reads
\begin{equation}
(F_{,\bar{x}})^2+(F_{,\bar{y}})^2=(F_{,{r}})^2+\frac{1}{r^2}(F_{,{\varphi}})^2=1.
\label{F2-1}
\end{equation}
We thus see that $F$ has to satisfy an eikonal equation (see e.g. \cite{eik}).

There are two obvious solutions to the equation above:
\begin{align}
F&=a\bar{x}+b\bar{y}+c, & F&=r+c,
\label{F-obv}
\end{align}
where $a$, $b$ and $c$ are constants such that
\[
a^2+b^2=1.
\]
Another solution to \eqref{F2-1} can be obtained by the following ansatz:
\begin{equation}
F=f(r)+a\varphi,
\label{F-exc}
\end{equation}
where $a$ is a constant. Then \eqref{F2-1} implies that
\[
f(r)=\pm\int\frac{\sqrt{r^2-a^2}}{r}dr=\pm\Big[\sqrt{r^2-a^2}-a\arccos\Big(\frac{a}{r}\Big)\Big]+b,
\]
where $b$ is the integration constant. 

A large class of solutions to the eikonal equation \eqref{F2-1} is presented in \cite{eik} in an implicit form. In this class a single solution $F(\bar{x},\bar{y})$ defined on an open set $U\subset \R^2$ is specified by a choice of \emi a differentiable function $\Psi$ of one real variable and \emii a differentiable function $\tau(\bar{x},\bar{y})$. If on $U$ 
\begin{equation}
\bar{x}-\frac{\tau}{\sqrt{1-\tau^2}}\bar{y}+\Psi'(\tau)=0
\label{tau}
\end{equation}
(where $\Psi'$ is the derivative  of $\Psi$), then
\[
F(\bar{x},\bar{y})=\bar{x}\,\tau+\bar{y}\sqrt{1-\tau^2}+\Psi(\tau).
\] 

This implicit solution can be used to obtain some fairly non-trivial explicit solutions to the eikonal equation. For instance, suppose that $a$ is a non-zero real number and insert 
\begin{equation}
\Psi'(\tau)=a\Big(\frac{\tau}{\sqrt{1-\tau^2}}\Big)^2
\label{Psi'}
\end{equation}
into the condition \eqref{tau} treated now as an equation for an unknown function $\tau$. This equation can be easily solved yielding the following two functions:
\begin{equation}
\tau_{\pm}(\bar{x},\bar{y})=\sgn(a)\frac{\pm\sqrt{\bar{y}^2-4a\bar{x}}+\bar{y}}{\sqrt{4a^2+(\pm\sqrt{\bar{y}^2-4a\bar{x}}+\bar{y})^2}}.
\label{tau-expl}
\end{equation}
On the other hand, it follows from \eqref{Psi'} that
\[
\Psi(\tau)=a\Big(-\tau+\frac{1}{2}\ln\Big|\frac{1+\tau}{1-\tau}\Big|\Big)+b,
\]
where $b$ is the integration constant. Thus 
\[
F(\bar{x},\bar{y})=\bar{x}\tau_{\pm}+\bar{y}\sqrt{1-\tau^2_{\pm}}+a\Big(-\tau_{\pm}+\frac{1}{2}\ln\Big|\frac{1+\tau_{\pm}}{1-\tau_{\pm}}\Big|\Big)+b
\]
with $\tau_{\pm}$ given by \eqref{tau-expl}, is an explicit solution to the eikonal equation \eqref{F2-1}. 

More explicit solutions of this sort will be derived in the forthcoming paper \cite{prep-ao-js}.

\subsection{Summary of the solutions}

All the solutions to the TEGR constraints, we found above, are of the following form:
\begin{align*}
\theta^1&=\pm\frac{G}{FH}dx, & \theta^2&=\pm\frac{H}{GF}dy, & \theta^3&=\pm\frac{F}{HG}dz,\\
r_I&=\frac{\kappa}{\sqrt{3}}\eps_{IKL}\,\theta^K\we\theta^L, & \xi^I&=0, & \zeta_I&=0,
\end{align*}
where the signs $\pm$ can be chosen independently for every $\theta^I$ and $\kappa$ is any non-zero real number. The last equation above follows from \eqref{boost-sol} and the assumption $\beta^{[IJ]}=0$. The solutions are grouped into three families, each family is specified by special form of the functions $F$, $G$ and $H$.

\paragraph{Family 1} Here $F$ is constant, $G$ is given by \eqref{Gz} and $H$ is any function of non-zero values. This gives us the following coframe:
\begin{align*}
\theta^1&= \pm\Big(\kappa\int \big(H(z)\big)^{-2}\,dz\Big)dx,\\
\theta^2&=\pm F^{-2}\Big(\kappa\int \big(H(z)\big)^{-2}\,dz\Big)^{-1}dy,\\  
\theta^3&= \pm \big(H(z)\big)^{-2}\Big(\kappa\int \big(H(z)\big)^{-2}\,dz\Big)^{-1}dz.
\end{align*}
To simplify the form of the solutions let us introduce a new coordinate system $(\bar{x},\bar{y},\bar{z})$:
\begin{align*}
  \bar{x}&:=\pm\kappa x, & \bar{y}&:=\pm\frac{y}{F^2\kappa}, & \bar{z}&:= \ln\Big|\int \big(H(z)\big)^{-2}\,dz\Big|.
\end{align*}  
In these coordinates
\begin{align*}
\theta^1&= e^{\bar{z}}d\bar{x}, &
\theta^2&= e^{-\bar{z}}d\bar{y}, &  
\theta^3&= \pm \frac{1}{\kappa}d\bar{z}.
\end{align*}

\paragraph{Family 2} This family is given by Equations \eqref{minus}  and \eqref{abc}. The solutions here are labeled by six constants $(a_i,b_i,c_i)_{i\in\{0,1\}}$, which satisfy the condition \eqref{abc-kappa}. The corresponding coframe reads:
\begin{equation}
\begin{aligned}
\theta^1&=\pm\frac{b_1y+b_0-c_1z-c_0}{(a_1x+a_0-b_1y-b_0)(c_1z+c_0-a_1x-a_0)}dx,\\
\theta^2&=\pm\frac{c_1z+c_0-a_1x-a_0}{(b_1y+b_0-c_1z-c_0)(a_1x+a_0-b_1y-b_0)}dy,\\
\theta^3&=\pm\frac{a_1x+a_0-b_1y-b_0}{(c_1z+c_0-a_1x-a_0)(b_1y+b_0-c_1z-c_0)}dz.
\end{aligned}
\label{th--}
\end{equation}
Note that if e.g. $a_1\neq 0$, then we can introduce a new coordinate $\bar{x}:=a_1x+a_0$ to simplify the results above.

\paragraph{Family 3} Here $G$ and $H$ are constants, and $F(x,y)$ satisfies the eikonal equation \eqref{F2-1}. It is convenient to express the corresponding coframe in a coordinates system $(\bar{x},\bar{y},\bar{z})$, where $(\bar{x},\bar{y})$ are given by \eqref{bxby} and $\bar{z}:=\kappa z/GH$:
\begin{align*}
\theta^1&=\pm\frac{1}{\kappa F}d\bar{x}, & \theta^2&=\pm\frac{1}{\kappa F}d\bar{y}, & \theta^3&=\pm\frac{F}{\kappa}d\bar{z}.
\end{align*}
For $F$ given by \eqref{F-exc} the corresponding coframe reads 
\begin{align*}
\theta^1&=\pm\frac{1}{\kappa F}(\cos\varphi\,dr-r\sin\varphi\,d\varphi), & \theta^2&=\pm\frac{1}{\kappa F}(\sin\varphi\,dr+r\cos\varphi\,d\varphi), & \theta^3&=\pm\frac{F}{\kappa}d\bar{z}.
\end{align*}

\section{Summary and outlook}

In this paper we considered the Hamiltonian formulation of TEGR introduced in \cite{q-suit,ham-nv} and the constraints on the phase space of the formulation. We showed that the ``position'' variables $(\xi^I)$ can be always gauge-transformed to zero on the whole spatial slice $\Sigma$ of the spacetime. This gauge simplified considerably the constraints in this formulation (compare Equations \eqref{boost-c}--\eqref{scal-c} with \eqref{brv-c} and \eqref{*scal}). Moreover, in this gauge the other ``position'' variables $(\theta^I)$ form a coframe orthonormal w.r.t. the spatial metric $q$.  

We then expressed the fields appearing in the constraints in terms of their components in the coframe $(\theta^I)$ and decomposed $(d\theta^I)$ and $(r_I)$ into the irreducible representations of $SO(3)$. This allowed us to find simple general solutions to the boosts and rotation constraints (Equations \eqref{boost-sol} and \eqref{Ar=0}), and give the scalar constraint the transparent algebraic form \eqref{scal-cmp}.

Seeking to further simplify the constraints, we considered the following issue: is it possible to zero the antisymmetric part of $(d\theta^I)$ by means of the $SO(3)$ gauge transformations generated by the rotation constraint? We made a preliminary analysis of the problem: we derived Equation \eqref{lin}, being a necessary and sufficient condition for an $SO(3)$ gauge transformation, which annihilates the antisymmetric part. We also analyzed the related issue: does a three-dimensional Riemannian metric admit an orthonormal coframe, whose exterior derivative is lacking its antisymmetric part?  This issue has been reduced to the problem of the existence of solutions to Equation \eqref{om2-ABC} being a system of six quadratic first-order PDE's imposed on nine unknown functions.  

We found also three families of exact solutions to all the constraints. To derive these solutions we assumed except the gauge $\xi^I=0$ that \emi $(\theta^I)$ is a Darboux coframe, \emii $(d\theta^I)$ has a vanishing antisymmetric part and \emiii the traceless symmetric part of $(r^I)$ is zero. Two of the three families consist of rather simple solutions, while the third one seems to be more interesting: the solutions in this family are given by solutions to the eikonal equation \eqref{F2-1}, which can be fairly complex.

Further analysis of the TEGR constraints and the solutions will be carried out in \cite{prep-ao-js}. There we will relate the constraints to those known from the standard formulation of General Relativity (called sometime Einstein constraints). Regarding the solutions: we will show  that each solution found in the present paper defines \emi a non-flat spatial metric $q$ on $\Sigma$ and \emii a non-flat spacetime metric $g$ on $\Mc$. We will also present new solutions to the TEGR constraints.

\paragraph{Acknowledgments} I am very grateful to Jerzy Lewandowski, Pawe{\l} Nurowski, \mbox{Piotr} So{\l}tan and Jakub Szymankiewicz for discussions and hints and to Adam Szereszewski for his help with some calculations.

\end{document}